\documentclass{llncs}
\usepackage[style=alphabetic]{biblatex}
\usepackage{kbordermatrix}
\usepackage{amsfonts,color,morefloats}
\usepackage{amssymb, amsmath,latexsym}
\usepackage{graphicx}
\newcommand{\Dlambdabeta}{\underline {\Lambda}_\beta}

\newcommand{\F}{\mathbb F}
\newcommand{\B}{\mathbb B}
\newcommand{\Tr}{\operatorname{Tr}}
\newcommand{\rank}{\operatorname{rank}}
\newcommand{\set}[1]{\left\{#1\right\}}

\newcommand{\RS}{\text{RS}}

\newcommand{\Cbase}{C_0}
\newcommand{\Gbase}{G_0}
\newcommand{\Hbase}{H_0}
\newcommand{\nbase}{n_0}
\newcommand{\kbase}{k_0}
\newcommand{\Dbase}{D_0}

\newcommand{\Cadd}{C}
\newcommand{\Dadd}{D}
\newcommand{\Gadd}{G}
\newcommand{\Hadd}{H}
\newcommand{\Daddbeta}{\underline{\Dadd}_{\beta}}
\newcommand{\Cobs}{C_{\rm comp}}
\newcommand{\Gobs}{G_{\rm comp}}
\newcommand{\Hobs}{H_{\rm comp}}

\newcommand{\Clin}{C}
\newcommand{\Dlin}{D}
\newcommand{\Glin}{G}
\newcommand{\Hlin}{H}
\newcommand{\Dlinpb}{D_\beta}

\title{On the Leakage of Massey Secret Sharing Schemes under Linear Computations}
\author{Nadja Aoutouf\inst{1,2,3} \and Daniel Augot\inst{1,2,3}}

\authorrunning{Nadja Aoutouf and Daniel Augot}

\institute{
INRIA                  % 1
\and
École Polytechnique    % 2
\and
LIX, CNRS UMR 7161     % 3
}

\begin{document}

\maketitle
\thispagestyle{plain}

% Provide the keywords *before* the abstract
% When keywords contain macros provide the text version as the optional argument

% Provide the abstract of your paper

\begin{abstract}

  Leakage attacks on secret sharing schemes exploit partial information about
  individual shares to recover the underlying secret. In coding
  theory, linear exact repair schemes (LERSs) enable the recovery of one
  codeword symbol from a small amount of information obtained from the remaining symbols, 
  provided that the code has sufficiently low rate. This can be interpreted as recovering 
  the secret from partial information, namely subfield symbols, of the shares.

  Recently, a randomized construction based on subfield subcodes was
  proposed for constructing LERS-derived leakage attacks against
  Massey secret sharing schemes based on general linear codes. 
  We extend this framework to multiple shared secrets whose corresponding 
  shares are related through linear computations, with leakage also 
  allowed on the computation outcomes.

  More precisely, we consider $N$ secrets, of which $K<N$ are linearly independent input values and the remaining $N-K$ secrets 
  are determined by linear computations on these inputs. We analyse the existence of LERS-derived leakage  that exploits this
  structure. We first study the case of addition  and then generalize our construction to arbitrary linear computations.

  Our analysis applies to general linear codes of length $n+1$ and dimension $k$ over 
  $\F_{q^m}$  with $k \leq Nn/(Km)$, and 
  supports arbitrary linear computations, whereas the previous subfield subcode construction only applies to $k \leq n/m-1.$ 
  Consequently, exploiting the linear relations enables LERS based leakage  which extend the range of code parameters vulnerable to such attacks.
  Finally, identical leakage functions can arise for certain linear relations, 
  making this a more realistic yet still potentially powerful attack model.
  Finally, simulations indicate that identical leakage functions can be used for certain linear relations, 
  yielding a more realistic attack model.

	% Linear exact repair schemes (LERSs) enable the reconstruction of a codeword
	% symbol of a linear code by downloading only subfield symbols from the
	% remaining symbols. Besides their applications to distributed storage,
	% LERSs have a cryptographic interpretation: they correspond to leakage attacks
	% on Massey secret sharing schemes. Recently, a randomized construction of
	% LERSs based on subfield subcodes was proposed for general linear codes.
	% In this work, we extend this framework from the recovery of a single secret
	% to the recovery of multiple secrets related by linear computations. More
	% precisely, we consider $N$ secrets satisfying a linear relation, where
	% $K<N$ of the secrets are linearly independent and the remaining secrets are
	% determined by the relation. Building on the subfield-subcode construction,
	% we derive explicit leakage attacks that exploit this additional linear
	% structure. For this, we
	% first study additive relations and then generalize our construction to
	% arbitrary linear computations represented by linear circuits.
	% Our construction applies to linear codes of length $n+1$ and dimension $k$
	% over $\F_{q^m}$ satisfying the dimensional threshold
	% $k \leq Nn/(Km)$. Consequently, exploiting the linear relations enables
	% leakage attacks for a larger dimensional threshold than the previous
	% subfield-subcode construction.

\keywords{Leakage Function, Linear Repair Scheme, Side-Channel Attack, Massey Secret Sharing Scheme, Linear Operations, MPC}
\end{abstract}

\section{Introduction}
Surprisingly, linear secret sharing schemes can themselves be subject to leakage attacks. 
In the leakage model for such attacks, each share is associated with a leakage function, such that the adversary 
obtains partial information about each share rather than the entire share. 
For example, an attacker may physically place a wire on an embedded device to measure a 
value of interest~\cite{TR:AK96}. Such attacks pose a serious threat to cryptographic 
implementations which use secret sharing (masking) to protect sensitive data.
These concerns initiated a line of work, beginning
with~\cite{C:BDIR18} and culminating
in~\cite{TCC:Kasser24,EC:Nguyen25}, with the goal of studying the
leakage resilience threshold of Shamir's secret sharing scheme.

In these works, the interplay between various parameters, such as the 
\emph{threshold}, the amount of information provided by the leakage, 
and the extension degree, is studied. The authors determine ranges of 
these parameters that guarantee resilience against arbitrary leakage 
from each share, regardless of the specific leakage functions chosen by the adversary.

From a constructive coding-theoretic perspective, viewed as an attack in cryptanalysis, 
Guruswami and Wootters~\cite{STOC:GurWoo16} construct a
linear exact repair scheme for full-support Reed--Solomon codes,
which directly yields leakage functions for Shamir's secret sharing scheme. More
precisely, consider a Reed--Solomon code of length $n+1$ over
$\F_{q^m}$ with evaluation points ${\alpha_0,\ldots,\alpha_n}$, such that 
the secret is identified with the first codeword symbol $c_0$, while
the adversary obtains, from each share $c_j$, $j\in[n]$, the leakage
$\Tr_{\F/\B}(c_j/\alpha_j)\in\F_q.$ In~\cite{STOC:GurWoo16} it is shown that
these leaked subsymbols are sufficient to reconstruct $c_0$ for
Reed--Solomon codes of dimension less than $(n+1)(1-1/q)$. In
particular, when $q=2$, the secret can be recovered from one bit
of leakage from each share.

The construction of explicit leakage
functions beyond the Reed--Solomon setting is considerably less
understood. A recent approach based on subfield
subcodes~\cite{wcc-paper} provides a general method for
constructing leakage functions for secret sharing schemes induced by
arbitrary linear codes. Given a linear code $C\subseteq\F_{q^m}^{n+1}$
and its dual $C^\perp$, this approach gives a randomized construction of 
leakage functions that enable secret recovery whenever
the dimension of $C$ is less than or equal to $n/m-1$. We refer to this result as the \emph{base case}.

\subsection{Motivation}
Massey's secret sharing scheme generalizes Shamir's secret sharing 
scheme by replacing Reed--Solomon codes with general codes~\cite{massey-secret-sharing93}. 
In cryptography, for instance in secure multiparty computation or in masking for protecting implementations, 
secrets can be processed through computations on their shares. For example, 
in threshold signature schemes, secret keys are distributed among several parties~\cite{BPR:PRATS22}. 
LERS-derived leakage resilience of secret sharing schemes has been mainly studied 
with respect to the recovery of a single secret from partial 
information leaked from its shares. The analysis of such leakage 
 when some computations involve shared secrets
has received significantly less attention.
This raises the question: can an 
adversary exploit this additional computational structure to recover the 
original secrets? In this work, we investigate this question for Massey's 
secret sharing schemes based on linear codes, considering, as a first step, linear computations.

\subsection{Contribution}
Massey's secret sharing schemes induced by a linear code $C\subseteq\F_{q^m}^{n+1}$ naturally preserve linear computations on secrets. 
For example, if secrets $u,v,w\in\F_{q^m}$ 
satisfy the additive relation $u+v=w$, 
then the corresponding maskings $c_u,c_v,c_w\in C$ of the secrets, such that $u=c_{u,0}$, $v=c_{v,0}$, and $w=c_{w,0}$, satisfy 
$c_u+c_v=c_w$. In this paper, we study the LERS based leakage in such settings. Our main contributions are as follows.
\begin{itemize}
    \item \textbf{Leakage attacks for additive relation.}
    % Building on the subfield subcode construction of leakage functions for Massey secret sharing schemes 
    %     introduced in~\cite{wcc-paper}, 
      We show that a single additive relation $w=u+v$ for secrets $u$,
      $v$ and $w$ enables leakage attacks beyond the base case
      of~\cite{wcc-paper}: for $k\leq 3n/(2m),$ there exist, with high probability, 
	  leakage functions that recover the secret from only $3n$ subsymbols over $\F_q$, 
	  while the base case (without computation) 
	  is restricted to $k\leq n/m-1.$ 
    \item \textbf{Experimental validation.}
    We validate our theoretical findings through simulations for the case of addition.
  \item \textbf{Generalization to linear computations}  For arbitrary
    linear computations with $N$ secrets, of which $K$ are linearly
    independent, and the remaining $N-K$ secrets are obtained by
    linear computations, we derive the bound
    $k\leq Nn/(Km)$.
	\item \textbf{Use of identical leakage functions.}
	We prove that the simple additive relation cannot be exploited when the same leakage functions are reused. 
	More precisely, any such repair scheme reduces to a repair scheme for the corresponding base code, 
	yielding no improvement over the base case. However, simulations indicate that for a general linear relation 
	$\mu_u u + \mu_v v = w$ with $\mu_u\neq\mu_v \in \F\setminus\{0,1\}$, the same leakage functions 
	can be reused while still exploiting the linear relation, making this a more realistic leakage attack.
\end{itemize}

% \subsection{Organization}
% Section 2 introduces the notation, definitions, and preliminary results used
% throughout the paper. 
% In Section 3, we present and analyze the construction of a leakage attack based on subfield
% subcodes for Massey secret sharing schemes with additive relations.
% Section 4 extends this construction to arbitrary linear
% circuits. Finally, Section 5 concludes the paper.

\section{Preliminaries}
For a positive integer $n$, we use the notation
$[n]_0 := \{0,1,\ldots,n\}$ and
$[n] := \{1,\ldots,n\}.$
We denote by $\B=\mathbb{F}_q$ the finite field of size $q$ and by
$\F$ an extension field of $\B$ of degree $m$.
We consider linear codes over $\F$ of length $n+1$. A
$[n+1,k]_{\F}$ code (we omit the minimum distance $d$, as it will not be
needed in the following) is a $k$-dimensional $\F$-linear subspace of
$\F^{n+1}.$ Whenever convenient, we also write $k(C)$ (resp. $n(C)$) for the dimension (resp. length) of $C$. 
Furthermore, we denote the redundancy of $C$ by $r(C)=n(C)-k(C)$ and the dual of $C$ by $D$.

\subsection{Puncturing and Shortening}
Let $C$ be a $[n+1,k]_{\F}$ code and let $I\subseteq [n]_0$ be an
index set. The \emph{puncturing} of $C$ at $I$, denoted by
$\underline{C}_I$, is the code of length $n+1-|I|$ obtained by
deleting the coordinates indexed by $I$, i.e.,
$\underline{C}_I
:=
\left\{
(c_j)_{j\in [n]_0\setminus I}
\;\middle|\;
(c_0,\ldots,c_n)\in C
\right\}.$
The \emph{shortening} of $C$ at $I$, denoted by $\overline{C}_I$,
is the code obtained by restricting to codewords that vanish on the
coordinates in $I$ and subsequently deleting these coordinates, i.e.,
$\overline{C}_I
:=
\left\{
(c_j)_{j\in [n]_0\setminus I}
\;\middle|\;
(c_0,\ldots,c_n)\in C,\;
c_i=0\text{ for all }i\in I
\right\}.$
It is well known that $\overline{(C^\perp)}_I=(\underline{C}_I)^\perp$. Moreover, if $H$ is a parity-check matrix of $C$, 
then a parity-check matrix of $\overline{C}_I$ is obtained by deleting from $H$ the columns indexed by $I$.
For readability, we will often  omit the index set $I$ and simply write
$\underline{C}$ and $\overline{C}$.

\subsection{Massey Secret Sharing Schemes}

We use the well-established connection between linear codes and secret sharing schemes~\cite{massey-secret-sharing93}, 
and follow the terminology of~\cite{doron2026discrepancyrandomlinearcodes}. We refer to a $[\nbase+1,\kbase]_{\F}$ 
code $\Cbase$ as the \emph{base code}, which we assume admits a generator matrix $\Gbase$ of the form
\begin{equation} 
	\Gbase= \begin{bmatrix} -1 & \phi\\ 
		0 & \overline{\Gbase} \end{bmatrix}, 
              \label{eq:Gbase}
\end{equation} 
where $\phi\in\F^\nbase$ is nonzero and $\overline{\Gbase}$ is a generator matrix of the shortening of $\Cbase$ at the first coordinate. 
Every $\underline{h}\in\underline{\Dbase}$ admits a unique extension $(h_0,\underline{h})\in\Dbase$, where
$h_0=\phi\cdot\underline{h}^{T}$.
We refer to the $\B$-linear map
	\begin{equation}\label{eq:phi:D}
			\phi_{\Dbase} :\begin{array}[t]{rcl}
				\underline{\Dbase}& \longrightarrow &\F \\
				\underline{h}& \longmapsto&  \phi\cdot\underline{h}^{T}
			\end{array}
    \end{equation}
as the \emph{extension map} associated with $\Dbase$.
Note that the assumption $\phi \neq 0^\nbase$ ensures that $\phi_{\Dbase}$ is nontrivial.
Let $s\in\F$ be a secret. 
The dealer samples a random codeword 
$c=(c_0,\ldots,c_{\nbase})\in\Cbase$
with the constraint $c_0=s$. The corresponding shares are  $c_1,\ldots,c_{\nbase}$. 
% Let $d_0$ and $d_0^\perp$ denote the minimum distances of the base code $\Cbase$ and the dual code $\Dbase$, respectively. 
% Then any set of $d_0^\perp-2$ shares provides no information about the secret, 
% whereas any set of $\nbase-d_0+2$ shares allow to reconstruct it.

\subsection{Subfield Subcodes}

Let $C\subseteq\F^{n+1}$ be a linear code. Its \emph{subfield
subcode} with respect to $\B$ is defined as
$C_\B := C\cap\B^{n+1}.$
Let $\zeta=(\zeta_1,\ldots,\zeta_m)$ be a basis of $\F$ over $\B$.
For a matrix $M\in\F^{u\times v}$, we denote by
$M_\B\in\B^{um\times v}$ the matrix obtained by expanding each
entry of $M$ with respect to $\zeta$, column-wise. More precisely,
the rows of $M_\B$ are indexed by pairs $(i,\ell)$, and
$(M_\B)_{(i,\ell),j}=(M_{ij})_\ell,$
where $(M_{ij})_\ell$ denotes the $\ell$-th coordinate of $M_{ij}$
with respect to $\zeta$. Similarly, we denote by $M^\B\in\B^{u\times
vm}$ the row-wise expansion of $M$.
If $H\in\F^{(n+1)\times(n+1-k)}$ is a parity-check matrix of $C$,
then $H_\B$ is a (not
necessarily full rank) parity-check matrix of $C_\B$. This gives
\begin{equation}
\dim_{\B}(C_\B)
\geq n+1-m(n+1-k).\label{eq:bound:subfield}
\end{equation}

\subsection{Linear Exact Repair Scheme by Guruswami and Wootters}
In a linear exact repair scheme (LERS), an erased codeword symbol is recovered from partial
information obtained from the remaining symbols, rather than by downloading the full symbols. 
In~\cite{STOC:GurWoo16}, Guruswami and Wootters showed that this approach can be used to repair an 
erased symbol of a Reed--Solomon codeword.
The determination of a secret from leakage in a Massey secret sharing scheme can be 
viewed as a linear exact repair problem: the secret coordinate $c_0$ is recovered 
from partial information obtained from the remaining code symbols.
A simplified version of a LERS is formalized as
follows.
\begin{definition}
	A \emph{linear exact repair scheme} (LERS) for a
	$[n+1,k]_{\F}$ code $C$ with respect to a subfield
	$\B\subseteq\F$ consists of
	\begin{itemize}
		\item $\B$-linear maps $g_j\colon\F\rightarrow\B$ for
		$j\in[n]$,
		\item a $\B$-linear reconstruction map
		$R\colon\B^n\rightarrow\F$,
	\end{itemize}
	such that, for every $c=(c_0,\ldots,c_n)\in C$, we have
	$c_0=R\bigl(g_1(c_1),\ldots,g_n(c_n)\bigr).$
\end{definition}
The main result of Guruswami and Wootters shows that full-length
Reed--Solomon codes over $\F$ admit a LERS over $\B$ when their
dimension is sufficiently small.
\begin{theorem}[\cite{STOC:GurWoo16}]
	Let $[\F:\B]=m$, $n+1=q^m$, and
	$k\leq (1-\frac{1}{q})(n+1)$. Then the Reed--Solomon code
	$\RS[\alpha,k]$ of length $n+1$, with support equal to the whole
	field $\F$, admits a LERS over $\B$.
\end{theorem}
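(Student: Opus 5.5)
The approach is the standard trace/dual-codeword method: choose for every $b\in\B$ a polynomial $p_b$ of degree less than $n+1-k$, so that its evaluations give a codeword of the dual code $\RS[\alpha,k]^\perp$ (up to the usual column multipliers). The key steps, in order, are as follows.

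\textbf{Step 1: description of the dual.} Since the support is all of $\F$, the dual is a generalized Reed--Solomon code. For every polynomial $p\in\F[x]$ with $\deg p<n+1-k$ and every $f$ with $\deg f<k$,
\[
\sum_{\alpha\in\F}\lambda_\alpha\, p(\alpha) f(\alpha)=0,
\]
where the $\lambda_\alpha$ are the (nonzero) column multipliers of the dual. For the full field one can take $\lambda_\alpha=1$, since $\sum_{\alpha\in\F}\alpha^i=0$ for $0\le i<q^m-1$. Relabel so that $\alpha_0$ is the evaluation point of the secret coordinate.

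\textbf{Step 2: the repair polynomials.} Take the basis $\zeta_1,\ldots,\zeta_m$ of $\F$ over $\B$ and define
\[
p_\ell(x)=\frac{\Tr_{\F/\B}\bigl(\zeta_\ell(x-\alpha_0)\bigr)}{x-\alpha_0}.
\]
This is a polynomial of degree $q^{m-1}-1$. The condition $\deg p_\ell<n+1-k$ becomes
\[
q^{m-1}-1<q^m-k,
\]
which holds whenever $k\le q^m-q^{m-1}=(1-\tfrac1q)(n+1)$. Its values are
\[
p_\ell(\alpha_0)=\zeta_\ell,\qquad p_\ell(\alpha_j)=\frac{\Tr(\zeta_\ell(\alpha_j-\alpha_0))}{\alpha_j-\alpha_0}\quad\text{for } j\ne 0.
\]

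\textbf{Step 3: take traces of the parity checks.} Applying Step 1 to $p_\ell$ and a codeword $c=(f(\alpha_j))_j$ and then taking the trace gives
\[
\Tr(\zeta_\ell c_0)=-\sum_{j\in[n]}\Tr\!\Bigl(\frac{\Tr(\zeta_\ell(\alpha_j-\alpha_0))}{\alpha_j-\alpha_0}\,c_j\Bigr)
=-\sum_{j\in[n]}\Tr\bigl(\zeta_\ell(\alpha_j-\alpha_0)\bigr)\,\Tr\!\Bigl(\frac{c_j}{\alpha_j-\alpha_0}\Bigr).
\]
The second equality uses $\B$-linearity of the trace, since the inner trace is a scalar in $\B$.

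\textbf{Step 4: the LERS.} Set the downloaded symbols to
\[
g_j(c_j)=\Tr\!\bigl(c_j/(\alpha_j-\alpha_0)\bigr)\in\B .
\]
Each $g_j$ is $\B$-linear. By Step 3, every $\Tr(\zeta_\ell c_0)$, $\ell\in[m]$, is a $\B$-linear combination of $g_1(c_1),\ldots,g_n(c_n)$. Because the trace form is nondegenerate, $c_0$ is recovered from these $m$ values through the dual basis $\zeta'_\ell$ (characterized by $\Tr(\zeta_\ell\zeta'_i)=\delta_{\ell i}$) via
\[
c_0=\sum_\ell \Tr(\zeta_\ell c_0)\,\zeta'_\ell .
\]
This reconstruction map $R$ is $\B$-linear, which gives the LERS.

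The main obstacle is the degree bookkeeping in Step 2, namely checking that $p_\ell$ really is a polynomial. Writing $\Tr(y)=\sum_{i=0}^{m-1}y^{q^i}$, the numerator has no constant term in $x-\alpha_0$, so it is divisible by $x-\alpha_0$, and its degree after division is $q^{m-1}-1$. One must also confirm that the dual column multipliers are nonzero constants for the full-field support, which the power-sum identity in Step 1 settles.
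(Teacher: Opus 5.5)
Your argument is correct and is the standard Guruswami--Wootters proof, which the paper cites without proving: the evaluations of your $p_\ell$ are $m$ dual codewords satisfying \eqref{eq:gw:basic:t} (because $p_\ell(\alpha_0)=\zeta_\ell$) and \eqref{eq:gw:basic:1} (because $p_\ell(\alpha_j)\in\operatorname{span}_{\B}\bigl((\alpha_j-\alpha_0)^{-1}\bigr)$), so your Steps 3--4 are exactly the trace argument in the paper's proof of the subsequent criterion. The resulting leakage $\Tr_{\F/\B}\bigl(c_j/(\alpha_j-\alpha_0)\bigr)$ is the one described in the introduction (with $\alpha_0=0$), and your degree count $q^{m-1}-1<q^m-k$ correctly gives $k\le(1-\frac1q)(n+1)$.
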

More generally, \cite{STOC:GurWoo16} gives a criterion for the existence of a LERS.
\begin{theorem}[\cite{STOC:GurWoo16}]
	Let $\Cbase$ be a base $[\nbase+1,\kbase]_{\F}$ code. If there
	exist $m$ dual codewords
	$h^{(1)},\ldots,h^{(m)}\in \Dbase$ satisfying
	\begin{align}
		\dim_\B\!\left(
			\operatorname{span}_\B
			\{h_0^{(1)},\ldots,h_0^{(m)}\}
		\right)
		&=m,
		\label{eq:gw:basic:t}\\
		\dim_\B\!\left(
			\operatorname{span}_\B
			\{h_j^{(1)},\ldots,h_j^{(m)}\}
		\right)
		&\leq 1,
		\qquad \forall\,j\in[\nbase],
		\label{eq:gw:basic:1}
	\end{align}
	then there exists a LERS requiring at most $\nbase$ $\B$-symbols.
\end{theorem}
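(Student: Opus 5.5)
The plan is the standard trace-duality argument. From the dual codewords we build the leakage maps $g_j$, and we recover the secret $c_0$ from its $m$ traces against $h_0^{(1)},\ldots,h_0^{(m)}$.

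\textbf{Parity-check relations.} Write $\Tr=\Tr_{\F/\B}$. For every $c\in\Cbase$ and every $i\in[m]$, the fact that $h^{(i)}\in\Dbase$ gives $\sum_{j=0}^{\nbase} c_j h_j^{(i)}=0$. Applying the $\B$-linear map $\Tr$ yields
\[
\Tr\bigl(c_0 h_0^{(i)}\bigr) = -\sum_{j=1}^{\nbase}\Tr\bigl(c_j h_j^{(i)}\bigr), \qquad i\in[m].
\]

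\textbf{Leakage maps.} Fix $j\in[\nbase]$. By \eqref{eq:gw:basic:1} there is some $\nu_j\in\F$ with $h_j^{(i)}=\lambda_{i,j}\nu_j$ and $\lambda_{i,j}\in\B$ for all $i$. If all $h_j^{(i)}$ vanish we set $\nu_j=0$, and no symbol is needed from that share. Define $g_j(x):=\Tr(\nu_j x)$; this is a $\B$-linear map $\F\to\B$. Since $\Tr$ is $\B$-linear, $\Tr(c_j h_j^{(i)})=\lambda_{i,j}\,g_j(c_j)$. Hence each of the $m$ right-hand sides above is a fixed $\B$-linear combination of the single symbols $g_1(c_1),\ldots,g_{\nbase}(c_{\nbase})$. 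In total at most $\nbase$ $\B$-symbols are downloaded.

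\textbf{Reconstruction.} By \eqref{eq:gw:basic:t}, the elements $h_0^{(1)},\ldots,h_0^{(m)}$ form a $\B$-basis of $\F$. Since the trace form $(x,y)\mapsto\Tr(xy)$ is nondegenerate, there is a dual basis $\eta_1,\ldots,\eta_m$ with $\Tr(h_0^{(i)}\eta_\ell)=\delta_{i\ell}$. Then
\[
c_0=\sum_{i=1}^m \Tr\bigl(c_0h_0^{(i)}\bigr)\eta_i .
\]
We therefore define
\[
R(y_1,\ldots,y_{\nbase}):=-\sum_{i=1}^m\Bigl(\sum_{j=1}^{\nbase}\lambda_{i,j}y_j\Bigr)\eta_i .
\]
This map is $\B$-linear, and it satisfies $c_0=R(g_1(c_1),\ldots,g_{\nbase}(c_{\nbase}))$ for every $c\in\Cbase$. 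Thus $(g_j)_j$ and $R$ form a LERS.

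The only genuinely substantive ingredient is the existence of the trace-dual basis, that is, the nondegeneracy of the trace form for the separable extension $\F/\B$. I would cite this as standard. Everything else is bookkeeping with $\B$-linearity.
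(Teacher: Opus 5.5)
Your proposal is correct and follows essentially the same trace-duality argument as the paper: one leakage symbol $\Tr_{\F/\B}(\beta_j c_j)$ per share, and the $m$ traces $\Tr_{\F/\B}(c_0 h_0^{(i)})$ recovered as fixed $\B$-linear combinations of these symbols. The only difference is cosmetic. The paper first replaces the $h^{(i)}$ by $\B$-combinations so that $h_0^{(i)}=\zeta_i$, and then only asserts that the traces determine $c_0$. You instead work directly with the trace-dual basis of $h_0^{(1)},\ldots,h_0^{(m)}$, which skips the normalization step and gives the reconstruction map $R$ explicitly.
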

\begin{proof}
	Let $\zeta_1,\ldots,\zeta_m$ be a basis of $\F$ over $\B$. By
	condition~\eqref{eq:gw:basic:t}, there exist dual codewords
	$h^{(1)},\ldots,h^{(m)}\in\Dbase$ such that $h_0^{(i)}=\zeta_i$ for $i\in[m].$
	By condition~\eqref{eq:gw:basic:1}, there exist 
	$\beta_1,\ldots,\beta_{\nbase}\in\F$ and coefficients
	$\lambda_j^{(i)}\in\B$ such that
	$h_j^{(i)}=\lambda_j^{(i)}\beta_j$ for each 
	$i\in[m]$ and $j\in[\nbase].$
	% Each $h^{(i)}$ can be written as
	% $h^{(i)}
	% =
	% (\zeta_i,
	% \lambda_1^{(i)}\cdot\beta_1,\ldots,
	% \lambda_{\nbase}^{(i)}\cdot \beta_{\nbase}).$
	For every
	$c=(c_0,\ldots,c_{\nbase})\in\Cbase$ and $i\in[m]$ we have
	\begin{align*}
	0
	=
	c_0\cdot h_0^{(i)}
	+\sum_{j\in[\nbase]}c_j \cdot h_j^{(i)}
	=
	c_0\cdot \zeta_i
	+\sum_{j\in[\nbase]}
	c_j\cdot \lambda_j^{(i)}\cdot \beta_j .
	\end{align*}
	Therefore,
	\[
	-\Tr_{\F/\B}(c_0 \cdot \zeta_i)
	=
	\sum_{j\in[\nbase]}
	\lambda_j^{(i)} \cdot
	\Tr_{\F/\B}(c_j\cdot \beta_j).
	\]
	Thus, from a single query $\Tr_{\F/\B}(c_j\cdot\beta_j)\in\B$
        for each $c_j$, we can compute 
        $\Tr_{\F/\B}(c_0 \cdot \zeta_i)$ for $i\in[m].$ Since
        $\{\zeta_1,\ldots,\zeta_m\}$ is a basis of $\F$ over $\B$, the maps $x\mapsto \Tr_{\F/\B}(c_0 \cdot x)$ are linearly independent, 
		so the  values $\Tr_{\F/\B}(c_0 \cdot \zeta_i)$ 
        determine $c_0$. \qed%Hence
                                                                            %the
                                                                            %erased
                                                                            %symbol
                                                                            %can
                                                                            % be reconstructed using at most $\nbase$ subsymbols. \qed
\end{proof}

\subsection{Linear Exact Repair Scheme via Subfield Subcodes}

% The dual characterization of LERS given by Guruswami and Wootter reduces
% the construction of a repair scheme to finding suitable dual codewords.
We recall the analysis of~\cite{wcc-paper}. First, punctured dual codewords
$\underline{h}^{(1)},\ldots,\underline{h}^{(m)}\in\underline{\Dbase}$
satisfying condition~\eqref{eq:gw:basic:1} are studied and then extended 
to the first coordinate so that they satisfy condition~\eqref{eq:gw:basic:t}. 
For this, let $\beta_1,\ldots,\beta_{\nbase}\in\F\setminus\set{0}$ be given, 
and consider punctured dual codewords whose $j$-th coordinates belong to the
line generated by $\beta_j$:
\[
\underline{\Dbase}_\beta
:=
\left\{
\underline{h}=(h_1,\ldots,h_{\nbase})\in\underline{\Dbase}
\;\middle|\;
h_j\in\operatorname{span}_{\B}(\beta_j),
\ \forall j\in[\nbase]
\right\}.
\]
Note that $\underline{\Dbase}_\beta$ is a $\B$-linear code over the
alphabet $\F$. For every $\underline{h}\in\underline{\Dbase}_\beta$,
there exist $\lambda_1,\ldots,\lambda_{\nbase}\in\B$ such that
$h_j=\lambda_j\beta_j$ for $j\in[\nbase]$. Therefore, define
\[
\Dlambdabeta:=
\left\{
(\lambda_1,\ldots,\lambda_{\nbase})\in\B^{\nbase}
\;\middle|\;
\exists\,\underline h\in\underline{\Dbase}_\beta
\text{ with }
h_j=\lambda_j\beta_j,\ \forall j\in[\nbase]
\right\},
\]
which is a $\B$-linear code over $\B$. More precisely, $
\Dlambdabeta
=
\underline{\Dbase}_\beta M_{\beta^{-1}}
\cap\B^{\nbase}$,
where $M_{\beta^{-1}}=\operatorname{Diag}(\beta_1^{-1},\ldots,\beta_{\nbase}^{-1}).$
Consider the extension map $\phi_{\Dbase}$ defined in~\eqref{eq:phi:D}.
Condition~\eqref{eq:gw:basic:t} requires the existence of
$m$ punctured dual codewords whose $\phi_{\Dbase}$-extensions
are $\B$-linearly independent. Consequently, a necessary condition is
$\dim_\B(\Dlambdabeta)\geq m.$
Using the dimension bound for subfield subcodes, we obtain
$\dim_\B(\Dlambdabeta)\geq \nbase-m\kbase.$
This bound suggests that the condition $
\kbase\leq \nbase/m-1
$
might be sufficient to guarantee the existence of a LERS. 
However, this dimension bound alone does not guarantee 
that the images of the corresponding 
$\phi_{\Dbase}$-extensions have dimension $m$.

\subsubsection{Necessary and Sufficient Condition.}
Consider the $\B$-linear map
	\begin{equation}\label{eq:phi:Lambda1}
			\psi :\begin{array}[t]{rcl}
				\Dlambdabeta&\longrightarrow &\F \\
				(\lambda_1,\ldots,\lambda_{\nbase})
				&\longmapsto&
				\sum_{i\in[\nbase]}\phi_i\beta_i\lambda_i 
			\end{array}.
    \end{equation}
Then 
$\beta_1,\ldots,\beta_{\nbase}$ will give a LERS if $\dim_\B(\operatorname{Im}(\psi))=m$.
In the following, a matrix characterization of this condition is derived.
%by explicitly deriving a parity-check matrix of $\Dlambdabeta$.
A parity-check matrix of $\underline{\Dbase}$ is given by
$\overline{\Gbase}$ from~\eqref{eq:Gbase}. Moreover,
\[
\begin{bmatrix}
\phi\\[.1em]
\overline{\Gbase}
\end{bmatrix}
\]
is a parity-check matrix of
$\underline{\Dbase}\cap\ker(\phi_{\Dbase}).$
For $\underline h\in\underline{\Dbase}_\beta$, let
$h=\lambda M_\beta$, where $M_\beta=\operatorname{Diag}(\beta_1,\ldots,\beta_{\nbase})$
and $\lambda\in\Dlambdabeta.$ Since $\underline h\in\underline{\Dbase}$, we have
$0=\overline{\Gbase}\underline h^T
=
\overline{\Gbase}(\lambda M_\beta)^T
=
(\overline{\Gbase}M_\beta)\lambda^T.
$
Consequently, a (not necessarily full-rank) parity-check matrix of
$\Dlambdabeta$ is given by
\[
(\overline{\Gbase}M_\beta)_\B
\in
\B^{m(\kbase-1)\times \nbase}.
\]
Similarly, $\ker(\psi)$ is the right kernel of the matrix
\begin{equation}
\label{eq:Kbeta}
K_\beta=
\begin{bmatrix}
(\phi M_\beta)_\B\\[.1em]
(\overline{\Gbase}M_\beta)_\B
\end{bmatrix}.
\end{equation}
Hence, the condition
$\dim_\B(\operatorname{Im}(\psi))=m$ is equivalent to requiring that
the rank of $K_\beta$ is exactly $m$ larger than the rank of a
parity-check matrix of $\Dlambdabeta,$ i.e.,
\begin{equation}\label{eq:CNS:t}
\rank_\B(K_\beta)
=
\rank_\B\big((\overline{\Gbase}M_\beta)_\B\big)+m .
\end{equation}
Therefore, any choice of
$\beta_1,\ldots,\beta_{\nbase}$ satisfying~\eqref{eq:CNS:t}
gives a linear exact repair scheme.

\subsubsection{Randomized Method.}
The rank characterization reduces the construction of a LERS to
finding $\beta_1,\ldots,\beta_{\nbase}\in\F\setminus \{0\}$
satisfying~\eqref{eq:CNS:t}. However, analyzing this rank is
difficult, since it requires controlling the dimension of 
subfield subcodes, which is not known in general. 
For
$\kbase\leq \nbase/m-1$, condition~\eqref{eq:CNS:t} is satisfied
with overwhelming probability when
$\beta_1,\ldots,\beta_{\nbase}$ are sampled independently and
uniformly from $\F\setminus\{0\}$. For the parameter choice
$q=2$ and $m=5$, and for a variety of parameter pairs $(n,k)$
(see~\cite{wcc-paper}), the condition was satisfied in all
$10\,000$ trials.

        \section{Analysis of Massey  Scheme for the Addition}\label{ch:add_gate}

         \subsection{Setup for the Addition}
		 
		Let $\Cbase$ be a code with generator matrix $\Gbase$ as in~\eqref{eq:Gbase},
		and let $\Hbase$ be a parity-check matrix of $\Cbase$. Consider three
		secret values $u,v,w\in\F$, which are masked by codewords
		$c_u,c_v,c_w\in\Cbase$, respectively, such that
		$c_{u,0}=u$, $c_{v,0}=v$, and $c_{w,0}=w$.	
		We focus on the addition operation $w=u+v$. The matrix
                		\[
           \Gobs=\begin{bmatrix}
             1 & 0 & 1\\
             0 & 1 & 1\end{bmatrix}
        \]
 		captures the relation between the 
		inputs $u,v$ and the output $w,$ with $
           (u,v,w)=(u,v) \cdot \Gobs.
         $
		The code $\Cobs$ generated by $\Gobs$ is referred to as the \textit{computational code}.
		Equivalently, it can be described by the
		parity-check matrix
 		\[
           \Hobs = \begin{bmatrix}
             1 & 1& -1\\\end{bmatrix},
        \]
         which describes the addition.
		Since the addition is a linear operation, the associated codewords (maskings) 
		satisfy $c_w=c_u+c_v$. Therefore, 
		\[
           (c_u,c_v,c_w)=(c_u,c_v) \begin{bmatrix}
             \textbf{1}_{\nbase+1} & \textbf{0}_{\nbase+1} & \textbf{1}_{\nbase+1}\\
             \textbf{0}_{\nbase+1} & \textbf{1}_{\nbase+1} & \textbf{1}_{\nbase+1}\end{bmatrix},
         \]
         where $\textbf{1}_{\nbase+1}$ and $\textbf{0}_{\nbase+1}$ denote the identity and zero matrices.
         % of 
	 %        size $(\nbase+1)\times(\nbase+1)$, respectively.
		The induced code
		\[
           \Cadd=\set{(c_u,c_v,c_w)\in \Cbase^3; c_w=c_u+c_v},
        \]
		is a $[3(\nbase+1),2\kbase]_{\F}$ code with generator matrix
        \[
           \Gadd=\begin{bmatrix}
             \Gbase &      \textbf{0} & \Gbase\\
             \textbf{0}      & \Gbase & \Gbase
           \end{bmatrix}.
         \]
         Actually $\Gadd$ can be expressed as the Kronecker product
         $\Gadd=\Gobs\otimes G_0,$ and $\Cadd=\Cobs\otimes \Cbase$ is
         the \emph{product code} induced by $\Cobs$ and $\Cbase$. The
         dual code of $\Cadd$, denoted by $\Dadd$, is a
         $[3(\nbase+1),3(\nbase+1)-2\kbase]_{\F}$ code. A redundant
         parity-check matrix for $\Cadd$ (equivalently, a generator matrix of
         $\Dadd$) is given by
         \[
           \begin{bmatrix}
             \Hbase &      \textbf{0} &      \textbf{0} \\
             \textbf{0}      & \Hbase &      \textbf{0} \\
             \textbf{0}      &      \textbf{0} & \Hbase \\
             \textbf{1}_{\nbase+1}    &    \textbf{1}_{\nbase+1} &    -\textbf{1}_{\nbase+1} \\
           \end{bmatrix}.
         \]
		 By removing redundant rows, we obtain a full-rank parity-check matrix
         \begin{equation}
           \Hadd =
           \begin{bmatrix}
             \Hbase &      \textbf{0} &      \textbf{0} \\
             \textbf{0}      & \Hbase &      \textbf{0} \\
             \textbf{1}_{\nbase+1}    &    \textbf{1}_{\nbase+1} &    -\textbf{1}_{\nbase+1} \\
           \end{bmatrix},\label{eq:pcheck:D}
         \end{equation}
		 whose rank is exactly $3(\nbase+1)-2\kbase.$ 
		 We write
		$c_u=(u,\underline{c_u})$ for $\underline{c_u}\in\F^{\nbase},$
		and analogously for $c_v$ and $c_w$. Similarly, a dual codeword
		$h\in\Dadd$ is denoted by
		\[
			h=(h_u,h_v,h_w)
			=
			\bigl(
				(h_{u,0},\underline{h_u}),
				(h_{v,0},\underline{h_v}),
				(h_{w,0},\underline{h_w})
			\bigr).
		\]
		Thus, for
			$c=
			\bigl(
				(u,\underline{c_u}),
				(v,\underline{c_v}),
				(w,\underline{c_w})
			\bigr)
			\in\Cadd$
		and $h\in\Dadd$, we have
		\begin{equation}
		\label{eq:scalar-product}
			h_{u,0}\cdot u
			+\underline{h_u}\cdot\underline{c_u}^{\,T}
			+h_{v,0}\cdot v
			+\underline{h_v}\cdot\underline{c_v}^{\,T}
			+h_{w,0}\cdot w
			+\underline{h_w}\cdot\underline{c_w}^{\,T}
			=0.
		\end{equation}
		We abuse notation and write
		$\underline{h}=(\underline{h_u},\underline{h_v},\underline{h_w}),$
		where
		$\underline{h_u}=(h_{u,1},\ldots,h_{u,\nbase}),$ 
		$\underline{h_v}=(h_{v,1},\ldots,h_{v,\nbase}),$ and
		$\underline{h_w}=(h_{w,1},\ldots,h_{w,\nbase}).$
		Define the coordinate sets
		$I_0=\{0,\nbase+1,2\nbase+2\}$ and
			$J=\{0,\ldots,3(\nbase+1)-1\}\setminus I_0.$
		We write
		$\underline{h}=(h_i)_{i\in J},$
		where the coordinates are indexed by the coordinate set $J$
		obtained by puncturing at $I_0$.
		More precisely, letting $J=J_u \sqcup J_v\sqcup J_w$ denote the corresponding partition of $J$, 
		we also use the notation
		\[
		\underline{h}=\big((h_i)_{i\in J_u},\,(h_i)_{i\in J_v},\,(h_i)_{i\in J_w}\big).
		\]
		% which is consistent with the block decomposition 
		% $\underline{h}=(\underline{h_u},\underline{h_v},\underline{h_w})$ introduced above.

		\subsection{Repair Problem for the Addition}

		In this setting, the repair problem consists of recovering $(u,v,w)$
		from $(\underline{c_u},\underline{c_v},\underline{c_w})$.
                % Since $w=u+v,$
		% any two of the three values $u$, $v$, and $w$ uniquely determine the
		% remaining one.
		The linear exact repair problem can be reformulated in terms of parity
		equations. We will show that the missing symbols can be recovered if
		there exists a matrix $H_{2m}$ consisting of $2m$ codewords
		$h^{(1)},\ldots,h^{(2m)}\in\Dadd$ of the form
		\[
			H_{2m}
			=
			\begin{bmatrix}
				h^{(1)}\\
				h^{(2)}\\
				\vdots\\
				h^{(2m)}
			\end{bmatrix}
			=
			\begin{bmatrix}
				h_{u,0}^{(1)} & \cdots & h_{u,\nbase}^{(1)}
				& h_{v,0}^{(1)} & \cdots & h_{v,\nbase}^{(1)}
				& h_{w,0}^{(1)} & \cdots & h_{w,\nbase}^{(1)}\\
				h_{u,0}^{(2)} & \cdots & h_{u,\nbase}^{(2)}
				& h_{v,0}^{(2)} & \cdots & h_{v,\nbase}^{(2)}
				& h_{w,0}^{(2)} & \cdots & h_{w,\nbase}^{(2)}\\
				\vdots & \ddots & \vdots
				& \vdots & \ddots & \vdots
				& \vdots & \ddots & \vdots\\
				h_{u,0}^{(2m)} & \cdots & h_{u,\nbase}^{(2m)}
				& h_{v,0}^{(2m)} & \cdots & h_{v,\nbase}^{(2m)}
				& h_{w,0}^{(2m)} & \cdots & h_{w,\nbase}^{(2m)}
			\end{bmatrix},
		\]
		such that the matrix 
		\begin{equation}
			A=
			\begin{bmatrix}
				h_{u,0}^{(1)}+h_{w,0}^{(1)}
				&
				h_{v,0}^{(1)}+h_{w,0}^{(1)}
				\\
				\vdots & \vdots
				\\
				h_{u,0}^{(2m)}+h_{w,0}^{(2m)}
				&
				h_{v,0}^{(2m)}+h_{w,0}^{(2m)}
			\end{bmatrix}\label{matrix:A:add}
		\end{equation}
               satisfies 		\begin{equation}
			\rank_{\B}\!\left(A^{\B}\right)=2m,
			\label{eq:lin_indep_mpc}
		\end{equation}
		where 
		$A^\B\in\B^{2m\times 2m}$ is obtained by expanding each entry of $A$
		as a row vector of its coordinates in $\B$.
		In addition, for each $j\in[\nbase]$, we require 
		\begin{align}
			\dim_{\B}\!\left(
				\operatorname{span}_{\B}
				\left\{
					h_{u,j}^{(1)},\ldots,h_{u,j}^{(2m)}
				\right\}
			\right)
			&= 1,
			\label{eq:linear_dep_const_mpc:u}
			\\
			\dim_{\B}\!\left(
				\operatorname{span}_{\B}
				\left\{
					h_{v,j}^{(1)},\ldots,h_{v,j}^{(2m)}
				\right\}
			\right)
			&= 1,
			\label{eq:linear_dep_const_mpc:v}
			\\
			\dim_{\B}\!\left(
				\operatorname{span}_{\B}
				\left\{
					h_{w,j}^{(1)},\ldots,h_{w,j}^{(2m)}
				\right\}
			\right)
			&= 1.
			\label{eq:linear_dep_const_mpc:w}
		\end{align}
         
		\begin{theorem}\label{thm:existence_add_gate}
			Let $\Cadd$ be the $[3(\nbase+1),2\kbase]_{\F}$ product code induced by
		    $\Cbase$ and $\Cobs$ as described above. If there exists a matrix $H_{2m}$
			satisfying \eqref{eq:lin_indep_mpc} and
			\eqref{eq:linear_dep_const_mpc:u}--\eqref{eq:linear_dep_const_mpc:w},
			then there exists an equally distributed LERS for $\Cadd$.
		\end{theorem}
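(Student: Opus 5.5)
The plan is to mimic the proof of the Guruswami--Wootters criterion, adapted to the product code $\Cadd$, and to use the relation $w=u+v$ to eliminate $w$.

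\emph{Extracting trace equations.} Let $h^{(1)},\ldots,h^{(2m)}$ be the rows of $H_{2m}$. By \eqref{eq:linear_dep_const_mpc:u}--\eqref{eq:linear_dep_const_mpc:w}, for each $j\in[\nbase]$ there exist $\beta_{u,j},\beta_{v,j},\beta_{w,j}\in\F\setminus\{0\}$ and coefficients $\lambda^{(i)}_{u,j},\lambda^{(i)}_{v,j},\lambda^{(i)}_{w,j}\in\B$ such that
$h^{(i)}_{u,j}=\lambda^{(i)}_{u,j}\beta_{u,j}$, and analogously for $v$ and $w$. Apply $\Tr_{\F/\B}$ to the parity equation \eqref{eq:scalar-product} for each $i$, and substitute $w=u+v$. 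This gives, for every $c\in\Cadd$ and every $i\in[2m]$,
\[
\Tr\bigl((h^{(i)}_{u,0}+h^{(i)}_{w,0})u+(h^{(i)}_{v,0}+h^{(i)}_{w,0})v\bigr)
=-\sum_{j\in[\nbase]}\bigl(\lambda^{(i)}_{u,j}\Tr(\beta_{u,j}c_{u,j})+\lambda^{(i)}_{v,j}\Tr(\beta_{v,j}c_{v,j})+\lambda^{(i)}_{w,j}\Tr(\beta_{w,j}c_{w,j})\bigr).
\]
The leakage functions are therefore $g_{u,j}(x)=\Tr(\beta_{u,j}x)$, and similarly for $v$ and $w$. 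Each of the $3\nbase$ shares contributes exactly one $\B$-symbol, which is what ``equally distributed'' means here. The right-hand side is a $\B$-linear function of the leaked symbols.

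\emph{Inverting the left-hand side.} This is the step I expect to be the main obstacle, since it must be matched exactly to the expansion convention of $A^\B$. Fix the basis $\zeta$ and its trace-dual basis $\zeta^*$. Write $u=\sum_\ell u_\ell\zeta^*_\ell$ and $v=\sum_\ell v_\ell\zeta^*_\ell$ with $u_\ell,v_\ell\in\B$.

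For $a=\sum_\ell a_\ell\zeta_\ell$ we have $\Tr(au)=\sum_\ell a_\ell u_\ell$. Hence the left-hand sides for $i=1,\ldots,2m$ form the vector $A^\B\,(u_1,\ldots,u_m,v_1,\ldots,v_m)^T$, where row $i$ of $A^\B$ is the row-wise expansion of row $i$ of $A$. By \eqref{eq:lin_indep_mpc}, $A^\B$ is an invertible $2m\times 2m$ matrix over $\B$. So $(u_\ell,v_\ell)_\ell$, and thus $u$ and $v$, are $\B$-linear functions of the leaked symbols, and $w=u+v$ follows.

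If the coordinates in the paper's expansion convention do not match exactly, I would absorb the resulting change of basis into an invertible $\B$-linear map. Rank is preserved, so the argument goes through unchanged.

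\emph{Conclusion.} Compose the inverse of $A^\B$ with the linear right-hand side. This gives a $\B$-linear reconstruction map $R\colon\B^{3\nbase}\to\F^3$ satisfying $R\bigl((g_{\bullet,j}(c_{\bullet,j}))\bigr)=(u,v,w)$ for all $c\in\Cadd$. This is a LERS for $\Cadd$ in the sense of the definition, extended to recover the three erased positions indexed by $I_0$.
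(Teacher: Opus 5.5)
Your proof is correct and follows the paper's argument. You use the one-dimensional column conditions to factor the entries as $\lambda^{(i)}_{\bullet,j}\beta_{\bullet,j}$, apply $\Tr_{\F/\B}$ to the parity equations after substituting $w=u+v$, and use the full rank of $A^{\B}$ to solve for $u$ and $v$. The only difference is cosmetic: the paper first row-reduces $A^{\B}$ to the identity, replacing the $h^{(i)}$ by $\B$-linear combinations whose $A$-rows are $(\zeta_i,0)$ and $(0,\zeta_i)$, and then reads off $\Tr_{\F/\B}(u\zeta_i)$ and $\Tr_{\F/\B}(v\zeta_i)$. You instead keep the original rows and invert $A^{\B}$ directly in trace-dual coordinates, which also spares you from checking that those $\B$-combinations still satisfy the one-dimensional column conditions.
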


		\begin{proof}
			% For all $c=(c_u,c_v,c_w)\in\Cadd$ and
			% $h=(h_u,h_v,h_w)\in\Dadd$, Equation~\eqref{eq:scalar-product}
			% holds.
                        Let $H_{2m}$ be a matrix satisfying
			\eqref{eq:lin_indep_mpc}. Since $\rank_{\B}(A^\B)=2m$, elementary
			$\B$-linear row operations transform $A^\B$ into
			\[
				\begin{bmatrix}
					\textbf{1}_m & \textbf{0}\\
					\textbf{0} & \textbf{1}_m
				\end{bmatrix},
			\]
			where $\textbf{1}_m$ denotes the $m\times m$ identity matrix. Thus, there exist dual codewords
			$h^{(1)},\ldots,h^{(2m)}\in\Dadd$ such that
			\[
				\begin{array}{ll}
					h_{u,0}^{(i)}+h_{w,0}^{(i)}=\zeta_i,
						& 1\leq i\leq m,\\
					h_{u,0}^{(i)}+h_{w,0}^{(i)}=0,
						& m+1\leq i\leq 2m,\\
			% 	\end{cases}
			% \]
			% and
			% \[
			% 	\begin{cases}
					h_{v,0}^{(i)}+h_{w,0}^{(i)}=0,
						& 1\leq i\leq m,\\
					h_{v,0}^{(i)}+h_{w,0}^{(i)}=\zeta_i,
						& m+1\leq i\leq 2m.
				\end{array}
			\]
			Furthermore, for each $i\in[2m]$, conditions
			\eqref{eq:linear_dep_const_mpc:u}--\eqref{eq:linear_dep_const_mpc:w}
			imply that, for every $j\in[\nbase]$, there exist
			$\lambda_{u,j}^{(i)},\lambda_{v,j}^{(i)},\lambda_{w,j}^{(i)}\in\B$
			and $\beta_{u,j},\beta_{v,j},\beta_{w,j}\in\F$ such that
			  $h_{u,j}^{(i)} = \lambda_{u,j}^{(i)}\beta_{u,j},\text{ }
				h_{v,j}^{(i)} = \lambda_{v,j}^{(i)}\beta_{v,j}, \text{ and }				
				h_{w,j}^{(i)} = \lambda_{w,j}^{(i)}\beta_{w,j}.$
			For each $i\in[m]$, Eq~\eqref{eq:scalar-product} gives
			\begin{align*}
				-c_{u,0}h_{u,0}^{(i)}
				-c_{v,0}h_{v,0}^{(i)}
				-c_{w,0}h_{w,0}^{(i)}
				&=
				\sum_{j=1}^{\nbase}
				\Big(
					c_{u,j}h_{u,j}^{(i)}
					+c_{v,j}h_{v,j}^{(i)}
					+c_{w,j}h_{w,j}^{(i)}
				\Big)\\
				-c_{u,0}
				\big(
					h_{u,0}^{(i)}+h_{w,0}^{(i)}
				\big)
				-c_{v,0}
				\big(
					h_{v,0}^{(i)}+h_{w,0}^{(i)}
				\big)
				&=
				\sum_{j=1}^{\nbase}
				\Big(
					c_{u,j}\lambda_{u,j}^{(i)}\beta_{u,j}
					+c_{v,j}\lambda_{v,j}^{(i)}\beta_{v,j}\\
				&\qquad
					+c_{w,j}\lambda_{w,j}^{(i)}\beta_{w,j}
				\Big).
			\end{align*}
			Hence, 
			\begin{align*}
				-c_{u,0}\zeta_i
				=
				\sum_{j=1}^{\nbase}
				\Big(
					c_{u,j}\lambda_{u,j}^{(i)}\beta_{u,j}
					+c_{v,j}\lambda_{v,j}^{(i)}\beta_{v,j}
					+c_{w,j}\lambda_{w,j}^{(i)}\beta_{w,j}
				\Big).
			\end{align*}
			Applying the trace operator $\Tr_{\F/\B}$ to the preceding equation, we obtain
			\[
			\begin{aligned}
				-\Tr_{\F/\B}\!\left(c_{u,0}\zeta_i\right)
				={}&
				\sum_{j=1}^{\nbase}
				\lambda_{u,j}^{(i)}
				\Tr_{\F/\B}\!\left(c_{u,j}\beta_{u,j}\right)\\
				&+
				\sum_{j=1}^{\nbase}
				\lambda_{v,j}^{(i)}
				\Tr_{\F/\B}\!\left(c_{v,j}\beta_{v,j}\right)
				+
				\sum_{j=1}^{\nbase}
				\lambda_{w,j}^{(i)}
				\Tr_{\F/\B}\!\left(c_{w,j}\beta_{w,j}\right).
			\end{aligned}
			\]
			Thus, after querying the $3\nbase$ subsymbols
			\[
				\Tr_{\F/\B}\!\left(c_{u,j}\beta_{u,j}\right),\qquad
				\Tr_{\F/\B}\!\left(c_{v,j}\beta_{v,j}\right),\qquad
				\Tr_{\F/\B}\!\left(c_{w,j}\beta_{w,j}\right),
				\qquad j\in[\nbase],
			\]
			we can, by varying $i\in[m]$ (and hence $\zeta_i$), compute the $m$
			linear functions
				$\Tr_{\F/\B}\!\left(c_{u,0}\zeta_i\right)$
				for $i\in[m].$
			Indeed, each of these values is determined by the queried subsymbols
			and the fixed, known coefficients
			$\lambda_{u,j}^{(i)},\lambda_{v,j}^{(i)},\lambda_{w,j}^{(i)}\in\B$.
			Since $\{\zeta_1,\ldots,\zeta_m\}$ is a basis over $\B$ of $\F$, these
			$m$ linear functions  determine $c_{u,0}$.

			Similarly, using the remaining dual codewords $h^{(m+1)},\ldots,h^{(2m)}\in\Dadd$, for $i\in[m]$, we can determine the  $m$ trace values
			$\Tr_{\F/\B}(c_{v,0}\zeta_i)$ using the analog queries as above, which
			uniquely determine $c_{v,0}$. \qed
		\end{proof}
		\subsection{Subfield Subcode Analysis}
		% In the previous subsection, we saw that, if there exists a matrix $H_{2m}$ satisfying the full-rank condition
		% \eqref{eq:lin_indep_mpc} and the one-dimensional column-span conditions
		% \eqref{eq:linear_dep_const_mpc:u}--\eqref{eq:linear_dep_const_mpc:w},
		% then we can construct an equally distributed LERS for $\Cadd.$ 
% 		We focus on the construction of a matrix $H_{2m}$ satisfying the full-rank condition
% 		\eqref{eq:lin_indep_mpc} and the one-dimensional  conditions
% 		\eqref{eq:linear_dep_const_mpc:u}--\eqref{eq:linear_dep_const_mpc:w}. 
% %		\subsubsection{Fullfilling the One-Dimensional Column-Span Conditions.}
		First, we construct a matrix $H_{2m}$ satisfying 
%		conditions
		\eqref{eq:linear_dep_const_mpc:u}--\eqref{eq:linear_dep_const_mpc:w}. Let $\underline{\Dadd}$ denote the puncturing of the dual code $\Dadd$
		at the coordinates in $I_0$, with coordinate set $J$. 
		% \begin{equation}
		% 	\Gbase=
		% 	\begin{bmatrix}
		% 		-1 & \phi\\
		% 		\textbf{0} & \overline{\Gbase}
		% 	\end{bmatrix},
		% 	\label{eq:Gbase}
		% \end{equation}
		 The following matrix is a generator matrix of $\Cadd=\Cobs\otimes\Cbase$  
		 and, consequently, a parity-check matrix of $\Dadd$:
		\[
			\Gadd=
			\begin{bmatrix}
				-1 & \phi & 0 & \textbf{0} & -1 & \phi\\
				\textbf{0} & \overline{\Gbase} & \textbf{0} & \textbf{0} & \textbf{0} & \overline{\Gbase}\\
				0 & \textbf{0} & -1 & \phi & -1 & \phi\\
				\textbf{0} & \textbf{0} & \textbf{0} & \overline{\Gbase} & \textbf{0} & \overline{\Gbase}
			\end{bmatrix}.
		\]
		Moreover,
		\begin{equation}
			\overline{\Gadd}
			=
			\begin{bmatrix}
				\overline{\Gbase} & \textbf{0} & \overline{\Gbase}\\
				\textbf{0} & \overline{\Gbase} & \overline{\Gbase}
			\end{bmatrix},
			\label{eq:parity:under:Dadd}
		\end{equation}
		is a generator matrix for $\overline{\Cadd}$, where
		$\overline{\Cadd}$ is the dual code of the punctured code
		$\underline{\Dadd}$.
		Puncturing the parity-check matrix of $\Cadd$ at 
		$I_0$ yields
		\[
			\begin{bmatrix}
				\underline{\Hbase} & \textbf{0} & \textbf{0}\\
				\textbf{0} & \underline{\Hbase} & \textbf{0}\\
				\textbf{0} & \textbf{0} & \textbf{0}\\
				\textbf{1}_{\nbase} & \textbf{1}_{\nbase} & -\textbf{1}_{\nbase}
			\end{bmatrix}
		\]
		and the dimension of the punctured code $\underline{\Dadd}$
		is $
			\dim_\B(\underline{\Dadd})
			=
			3(\nbase+1)-2\kbase-1
		$.
		For each $j\in J$, let $\beta_j\in\F\setminus\{0\}$ be such that the
		$j$-th column of $H_{2m}$ is contained in
		$\operatorname{span}_{\B}(\beta_j)$.
		Let $\Daddbeta$ be defined by
		\[
			\Daddbeta
			:=
			\left\{
				\underline{h}\in\underline{\Dadd}
				\;\middle|\;
				h_j\in\operatorname{span}_{\B}(\beta_j)
				\text{ for all }j\in J
			\right\},
		\]
		which is a $\B$-linear code of length $3\nbase$ over the alphabet $\F$.
        We define
		\[
			\Dlambdabeta
			:=
			\left\{
				\underline{\lambda}\in\B^{3\nbase}
				\;\middle|\;
				\exists\,\underline{h}\in\Daddbeta
				\text{ such that }
				h_j=\lambda_j\beta_j
				\text{ for all }j\in J
			\right\},
		\]
		which is a $\B$-linear code of length $3\nbase$ over
		the alphabet $\B$.
		With the diagonal matrix
			$M_{\beta^{-1}}
			:=
			\operatorname{Diag}\!\left(\beta_j^{-1}\right)_{j\in J}
			\in\F^{3\nbase\times 3\nbase}$, we have, 
			similarly to the base case, that
			\[
				\Dlambdabeta
				=
				\left(
					\underline{\Dadd}\cdot M_{\beta^{-1}}
				\right)
				\cap \B^{3\nbase},
                              \]
                             is a subfield subcode.
The redundancy of $\Daddbeta$ is $
			r(\Daddbeta)
			=
			n(\Daddbeta)-k(\Daddbeta)
			=
			2\kbase-2$.
		Using the lower bound on the dimension 
			$\dim_{\B}(\Dlambdabeta)
			\geq
			n(\Daddbeta)-m\cdot r(\Daddbeta)$, we obtain
		\begin{equation*}
			\dim_{\B}(\Dlambdabeta)
			\geq
			3\nbase-2m(\kbase-1).
%			\label{eq:dim:sub:gen:case}
		\end{equation*}
		To ensure that $\dim_{\B}(\Dlambdabeta)\geq 2m,$ it suffices to have
			$$\kbase \leq \frac{3\nbase}{2m}.$$
		Hence, there exist $2m$ linearly independent elements of $\Dlambdabeta$ 
		satisfying the conditions
		\eqref{eq:linear_dep_const_mpc:u}--\eqref{eq:linear_dep_const_mpc:w}.
%		\subsubsection{Fullfilling the Full-Rank Condition.}
		Once we have
		$\underline{h}=(\underline{h_u},\underline{h_v},\underline{h_w})\in\Daddbeta,$
		we seek to extend $\underline{h}$ to codewords
		$h_u,h_v,h_w$ satisfying the full-rank condition
		\eqref{eq:lin_indep_mpc}. From~\ref{matrix:A:add}, we are concerned with the values 
		$h_{u,0}+h_{w,0}$
		and
		$h_{v,0}+h_{w,0}$.
		\begin{definition}[Add extension map]\label{def:add:ext:map}
			Let $\Gbase$ be a generator matrix as in \eqref{eq:Gbase}, whose
			first row is $(-1,\phi)$. We define the \emph{add extension map}
			$\phi_{\Dadd}$ by
			\begin{equation}
				\phi_{\Dadd} \colon \begin{array}[t]{rcl}
					\underline{\Dadd}& \longrightarrow &\F^2 \\
					(\underline{h_u},\underline{h_v},\underline{h_w})& \longmapsto &
					\left(\phi\cdot \underline h_u^T+\phi\cdot \underline h_w^T,\ \phi\cdot \underline h_v^T+\phi\cdot \underline h_w^T\right).
				\end{array}
			\end{equation}
		\end{definition}
		Applying this map to
		$\underline{h}\in\Daddbeta$, and writing
			$h_{u,j}=\lambda_{u,j}\beta_{u,j}$, 
			$h_{v,j}=\lambda_{v,j}\beta_{v,j}$, 
			$h_{w,j}=\lambda_{w,j}\beta_{w,j}$ for $ j\in[\nbase]$
		with $\underline{\lambda}\in\Dlambdabeta$,
		gives the $\B$-linear map
		\begin{equation}\label{eq:phi:Lambda111}
			\psi_{\Dlambdabeta} : \begin{array}[t]{rcl}
				\Dlambdabeta& \rightarrow &\F^2 \\
				\underline{\lambda} & \mapsto &
				(
					\sum_{j=1}^{\nbase}
					\phi_j
					\bigl(
						\lambda_{u,j}\beta_{u,j}
						+
						\lambda_{w,j}\beta_{w,j}
					\bigr),
					\sum_{j=1}^{\nbase}
					\phi_j
					\bigl(
						\lambda_{v,j}\beta_{v,j}
						+
						\lambda_{w,j}\beta_{w,j}
					\bigr)
				).
			\end{array}
		\end{equation}
		It remains to ensure that
		\begin{equation}\label{eq:dim:3m}
		 	\dim_{\B}\bigl(\operatorname{Im}(\psi_{\Dlambdabeta})\bigr)=2m.
		\end{equation}
		% then there exist codewords
		% $\underline{h}^{(1)},\ldots,\underline{h}^{(2m)}\in\Daddbeta$
		% whose images under $\psi_{\Dlambdabeta}$ form a basis over $\B$ of $\F^2$ of the form
		% $(\zeta_1,0),\ldots,(\zeta_m,0),(0,\zeta_1),\ldots,(0,\zeta_m).$
		% Let $\ker(\psi_{\Dlambdabeta})$ and
		% $\operatorname{Im}(\psi_{\Dlambdabeta})$ denote the kernel and image of
		% $\psi_{\Dlambdabeta}$, respectively. By the rank--nullity theorem,
		% 	$\dim_{\B}\ker(\psi_{\Dlambdabeta})
		% 	+
		% 	\dim_{\B}\operatorname{Im}(\psi_{\Dlambdabeta})
		% 	=
		% 	\dim_{\B}\Dlambdabeta.$
		% Consequently, if $\dim_{\B}\operatorname{Im}(\psi_{\Dlambdabeta})=2m,$
		% then
		% 	$\dim_{\B}\ker(\psi_{\Dlambdabeta})
		% 	=
		% 	\dim_{\B}\Dlambdabeta-2m.$
		% In particular, we have
		% 	$\dim_{\B}\Dlambdabeta\geq 2m.$
		% With the bound \eqref{eq:dim:sub:gen:case}, this yields the
		% condition
		% \begin{equation}
		% 	\kbase\leq\frac{3\nbase}{2m}.
		% 	\label{eq:cond1}
		% \end{equation}
		Previously, we have shown that $\kbase\leq3\nbase /(2m)$ is sufficient to ensure that
		$\dim_{\B}(\Dlambdabeta)\geq 2m$. However, this does not imply that
		$\dim_{\B}\bigl(\operatorname{Im}(\psi_{\Dlambdabeta})\bigr)\geq 2m.$
		% Determining the dimension of
		% $\operatorname{Im}(\psi_{\Dlambdabeta})$ is equivalent to computing the
		% dimension of a subfield subcode, which is known to be difficult in
		% general. 
                % We therefore adopt a randomized approach to find suitable
		% values $\beta_j\in\F$ for $j\in J$,
		% such that
		% 	$\dim_{\B}\bigl(\operatorname{Im}(\psi_{\Dlambdabeta})\bigr)=2m.$
%		For this, we first introduce parity-check matrices that are used later to analyse this condition.
        Let us first make the condition in~\eqref{eq:dim:3m} explicit.
		Recall that the matrix in
		\eqref{eq:parity:under:Dadd} is a parity-check matrix for
		$\underline{\Dadd}$. 
		Define $M_{3\beta}
			=
			\operatorname{Diag}
			\bigl(
				M_{u,\beta},
				M_{v,\beta},
				M_{w,\beta}
			\bigr)$,
		where
		\[
			M_{u,\beta}
			=
			\operatorname{Diag}\bigl((\beta_j)_{j\in J_u}\bigr),
			\quad
			M_{v,\beta}
			=
			\operatorname{Diag}\bigl((\beta_j)_{j\in J_v}\bigr),
			\quad
			M_{w,\beta}
			=
			\operatorname{Diag}\bigl((\beta_j)_{j\in J_w}\bigr).
		\]
		Since
		\[
			0
			=
			\overline{\Gadd}\,\underline{h}^{\,T}
			=
			\overline{\Gadd}
			\left(\underline{\lambda}M_{3\beta}\right)^{T}
			=
			\left(\overline{\Gadd}M_{3\beta}\right)
			\underline{\lambda}^{\,T},
		\]
		a (not necessarily full-rank) parity-check matrix for
		$\Dlambdabeta$ is given, after expanding the columns over $\B$, by
		\begin{equation}\label{eq:H_Lambdabeta0}
			H_{\Dlambdabeta}
			=
			\begin{bmatrix}
				(\overline{\Gadd_0}M_{u,\beta})_{\B} & \mathbf{0} &
				(\overline{\Gadd_0}M_{w,\beta})_{\B}\\
				\mathbf{0} & (\overline{\Gadd_0}M_{v,\beta})_{\B} &
				(\overline{\Gadd_0}M_{w,\beta})_{\B}
			\end{bmatrix}
			\in
			\B^{m\cdot 2(\kbase-1)\times 3\nbase}
		\end{equation}
		and the matrix
		\begin{equation}\label{eq:H_Lambdabeta}
			H^0_{\Dlambdabeta}
			=
			\begin{bmatrix}
				(\phi M_{u,\beta})_{\B} & \mathbf{0} &
				(\phi M_{w,\beta})_{\B}\\
				(\overline{\Gbase}M_{u,\beta})_{\B} & \mathbf{0} &
				(\overline{\Gbase}M_{w,\beta})_{\B}\\
				\mathbf{0} & (\phi M_{v,\beta})_{\B} &
				(\phi M_{w,\beta})_{\B}\\
				\mathbf{0} & (\overline{\Gbase}M_{v,\beta})_{\B} &
				(\overline{\Gbase}M_{w,\beta})_{\B}
			\end{bmatrix}
		\end{equation}
		is a parity-check matrix for
		$\ker(\psi_{\Dlambdabeta})$.
                % , since the additional rows describe the
		% the add extension map \eqref{def:add:ext:map}.
		Consequently,
		\begin{equation}
			\label{eq:Dlambdabeta}
			\dim_{\B}(\Dlambdabeta)
			=
			3\nbase
			-
			\operatorname{rank}_{\B}\!\left(
				H_{\Dlambdabeta}
			\right).
		\end{equation}
		Furthermore,
                % since 
		% $\ker(\psi_{\Dlambdabeta})=\{x\in\F^{3\nbase}\colon H^0_{\Dlambdabeta} x^T=0\},$
		the rank-nullity theorem implies 
		\begin{equation}
			\label{eq:ker:Dlambdabeta}
			\dim_{\B}(\ker(\psi_{\Dlambdabeta}))
			=
			3\nbase
			-
			\operatorname{rank}_{\B}\!\left(
				H^0_{\Dlambdabeta}
			\right).
		\end{equation}
		By \eqref{eq:Dlambdabeta} and \eqref{eq:ker:Dlambdabeta}, the condition
		$\dim_{\B}\bigl(\operatorname{Im}(\psi_{\Dlambdabeta})\bigr)=2m$
		is equivalent to
		\begin{equation}
			\operatorname{rank}_{\B}\!\left(H^0_{\Dlambdabeta}\right)
			-
			\operatorname{rank}_{\B}\!\left(H_{\Dlambdabeta}\right)
			=2m.
			\label{eq:rkcond:Im:psi}
		\end{equation}
		Thus, any choice of $\beta_j\in\F\backslash\{0\}$, $j\in J$, satisfying
		\eqref{eq:rkcond:Im:psi} gives a linear exact repair scheme.
		Again, a theoretical analysis of condition~\eqref{eq:rkcond:Im:psi}
		is not known. We therefore resort to a random search.
		% To summarize, the subfield subcode construction succeeds
		% if the dimensional threshold (\ref{eq:cond1}) for the product code is fulfilled 
		% and suitable $\beta$-values are determined via the randomized method.
		
		\subsection{Simulations}
		\begin{figure}[ht]
			\begin{center}
				\includegraphics[width=\textwidth]{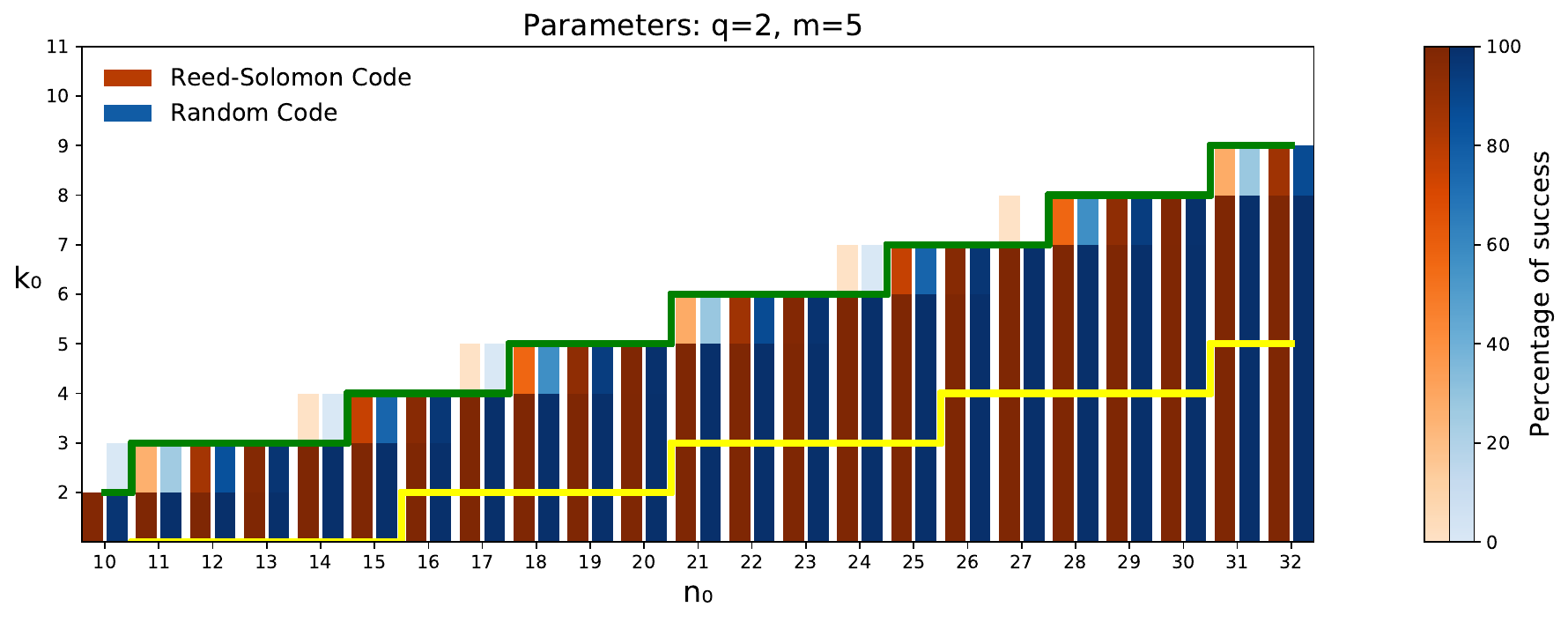}
			\end{center}
			\caption{Percentage of success for the addition (based on $10^6$ trials for each pair  $(\nbase,\kbase)$) for $q=2$ and 
			$m=5$. Here, success 
			means that \eqref{eq:rkcond:Im:psi} holds. The upper bound $\kbase\leq 3\nbase/2m$ is indicated in green, while the 
			bound  for the base case is indicated in yellow.}\label{successprob_q2_t5}
		\end{figure}
				We implemented the randomized method in MAGMA and
				performed simulations for various parameter
				choices. The results for fixed $q=2$ and $m=5$ are
				shown in Figure~\ref{successprob_q2_t5} for both
				Reed--Solomon codes and randomly generated codes.
				Following the randomized method, for each pair
				$(\nbase,\kbase)$, $10^6$ independent trials were
				performed. In each trial, the elements
				$\beta_{u,j},\beta_{v,j},\beta_{w,j}\in\F\setminus\{0\}$
				are chosen independently and uniformly at random. Then the matrices in
                \eqref{eq:H_Lambdabeta0} and \eqref{eq:H_Lambdabeta}
                are constructed, and it is verified whether the rank
                condition~\eqref{eq:rkcond:Im:psi} holds. As expected,
                this condition is satisfied with high probability when
                $\kbase\leq 3 \kbase/(2m)$ is met (highlighted in
                green). This demonstrates that the
                upper bound  for $\kbase$ allows leakage for more Massey secret sharing schemes.

                In other words, the secrets of the corresponding Massey secret
                sharing schemes for the addition  are vulnerable
                to leakage. Furthermore, note that the success
                probability of finding suitable $\beta$ values increases
                as $\kbase$ decreases.  For comparison, the yellow
                line in the figures represents the upper bound
                of the construction
                of~\cite{wcc-paper}. The gap between these two
                bounds illustrates the improvement achieved by our
                construction.

	%%%%%%%%%%%%%%%%%%%%%%%%%%%%%%%%%%%%%%%%%%%%%%%%%%
	%%%%%%%%%%%%%%%%%%%%%%%%%%%%%%%%%%%%%%%%%%%%%%%%%%
	% Generalization to any linear operational gate %
	%%%%%%%%%%%%%%%%%%%%%%%%%%%%%%%%%%%%%%%%%%%%%%%%%%
	%%%%%%%%%%%%%%%%%%%%%%%%%%%%%%%%%%%%%%%%%%%%%%%%%%
		
\section{Generalization to Linear Computations}

	\subsection{Masking of General Linear Computations}
		The computation proceeds as follows. 
		Given $\F$-linearly independent input values
		$u_1,\ldots,u_K\in\mathbb{F}$ and intermediate or output values
		$w_1,\ldots,w_{N-K}\in\mathbb{F},$ the computation of the  
		values $w_1,\ldots,w_{N-K}$ from the inputs $u_1,\ldots,u_K$ can be
		described by a matrix 
        $\Hobs \in \mathbb{F}^{(N-K)\times N}$ of the form
		\[
		\Hobs =
		\kbordermatrix{
		& u_1 & \cdots & u_K & w_1 & \cdots & w_i & \cdots & w_{N-K} \\
		g_1 & g_{1,1} & \cdots & g_{1,K} & -1 & \cdots & 0 & \cdots & 0 \\
		\vdots & \vdots & & \vdots & \vdots & & \vdots & & \vdots \\
		g_i & g_{i,1} & \cdots & g_{i,K} & g_{i,K+1} & \cdots & -1 & \cdots & 0 \\
		g_{i+1} & g_{i+1,1} & \cdots & g_{i+1,K} & g_{i+1,K+1} & \cdots & g_{i+1,K+i} & \cdots & 0 \\
		\vdots & \vdots & & \vdots & \vdots & & \vdots & & \vdots \\
		g_{N-K} & g_{N-K,1} & \cdots & g_{N-K,K} & g_{N-K,K+1} & \cdots & g_{N-K,K+i} & \cdots & -1
		}.
		\]
		Each row $g_i$ describes a step of a linear operation, that expresses the  value  $w_i$ as a linear combination 
		of the input values and the previously computed intermediate values.
		We define the associated $[N,K]_\F$ code by
		\[
		\Cobs=\{c\in\mathbb{F}^N:\Hobs \cdot c^T=0\},
		\]
		which we again call the \emph{computation code}.
		Let $\Gobs$ be the systematic generating matrix of
                $\Cobs,$ whose first $K$ columns form the identity matrix. Then
		\[
			(u_1,\dots,u_K,w_1,\dots,w_{N-K})= (u_1,\dots,u_K) \cdot \Gobs.
		\]
		Masking is done using the base code $\Cbase$,  such that each $u_i$ (respectively $w_i$) has an  associated  $c_{u_i}\in\Cbase$
		(respectively $c_{w_i}\in\Cbase$).
                % As in the previous section, the base code $\Cbase$ is generated by the systematic generator matrix
		%$\Gbase \in \mathbb{F}^{k_0 \times (n_0+1)}$.
		Then, each computation is applied in parallel to
		each coordinate of the $c_{u_i}$'s and the $c_{w_i}$'s i.e., at step $g_i,$ we have
		\[
			c_{w_i}= g_{i,1}c_{u_1}+\dots+g_{i,K}c_{u_K}+g_{i,K+1}c_{w_1}+\cdots+g_{i,K+i-1}c_{w_{i-1}}. 
		\]
		Then the codeword
		$c=(c_{u_1},\dots,c_{u_K},c_{w_1},\dots,c_{w_{N-K}})$
		belongs to the $[N(n_0+1), Kk_0]_{\F}$ product code   $\Clin= \Cobs \otimes \Cbase.$ 
		For simplicity, we index $c$ as follows
		\[
		c=(c_{u_1},\dots,c_{u_K},c_{w_1},\dots,c_{w_{N-K}})=
		(c_1,\dots,c_{N}),
		\]
		where $c_i\in\Cbase$. We denote the $j$-th coordinate of $c_i$ by $c_{i,j}$, so that
		$c_i = (c_{i,0},\dots,c_{i,n_0}).$ Furthermore, masking is such that 
			$$(c_{1,0},\ldots,c_{K,0},c_{K+1,0},\ldots,c_{N,0})=(u_1,\dots,u_K,w_1,\dots,w_{N-K}).$$
			The dual code of $\Clin$, denoted by $\Dlin,$
                        is a $[N(n_0+1),N(n_0+1)-Kk_0]_{\F}$
                        code. %with parity-check matrix $H$.
                        Using the above indexing, we also denote a dual
                        codeword by $h=(h_1,\dots,h_N)\in C^\perp$. When
                        puncturing $D$ at the set
                        $I_0=\set{0,\nbase+1,\dots,(K-1)(\nbase+1)}$, we again
                        abuse the underline notation and write
		\[
		\underline{h}=(\underline{h_1},\ldots,\underline{h_N}),
		\]
		where
		$\underline{h}_i=(h_{i,1},\ldots,h_{i,\nbase})\in\F^{\nbase}$ for $i\in [N]$ is the puncturing of $h_i$ at position $0$.
		Whenever convenient, we  write
		$\underline{h}=(h_i)_{i\in J},$
		with  $J=[N(\nbase+1)-1]_0\setminus I_0$.
		Let $J=J_1 \sqcup \ldots \sqcup J_{N}$ be the corresponding partition of $J$,
		we also use the notation
		\[
		\underline{h}=\big((h_i)_{i\in J_1},\,\ldots,\,(h_i)_{i\in J_{N}}\big).
		\]
		% which is consistent with the block decomposition
		% $\underline{h}=(\underline{h}_0,\ldots,\underline{h}_{N-1})$ introduced above.
  \subsubsection{Example: Summation of an Array.}\label{ex:array1}
   Let $(u_1,\dots,u_K)\in\F^K$ be an array of input values. Using 
  registers $r_u,r_w\in\F$, the computation of $u_1+\cdots+u_K$ can be 
  implemented by initializing $r_w=0$ and, for each $i\in[K]$, performing the 
  operations $r_u\gets u_i$ (memory load) and $r_w\gets r_w+r_u$ (addition). 
  We consider leakage from the successive values of $r_u$ (the $u_i$'s) and the
  successive values of $r_w$ (the partial sums, denoted by $w_i$). The associated 
  computation code for $K=4$ has parity-check matrix $\Hobs$ and generator matrix $\Gobs$ as follows:
	\[
	\Hobs =
        \kbordermatrix{
          &u_1&u_2&u_3&u_4&w_1&w_2&w_3&w_4\\
	&1 & 0  & 0  & 0  & -1 & 0 & 0 & 0 \\
	&0 & 1 & 0  & 0  & 1 & -1 & 0 & 0 \\
	&0 & 0 & 1 & 0  & 0 & 1 & -1 & 0 \\
	&0 & 0 & 0 & 1 & 0 & 0 & 1 & -1
},\quad
	\Gobs =
        \kbordermatrix{
          &u_1&u_2&u_3&u_4&w_1&w_2&w_3&w_4\\
	&1 & 0 & 0 & 0 & 1 & 1 & 1 & 1 \\
	&0 & 1 & 0 & 0 & 0 & 1 & 1 & 1 \\
	&0 & 0 & 1 & 0 & 0 & 0 & 1 & 1 \\
	&0 & 0 & 0 & 1 & 0 & 0 & 0 & 1
		}.
	\]
	Thus, the full vector that is leaked is
	$(u_1,u_2,u_3,u_4, w_1,w_2,w_3,w_4)\in\mathbb{F}^8$, where
	\[
	(u_1,u_2,u_3,u_4, w_1,w_2,w_3,w_4)
	=
	(u_1,u_2,u_3,u_4)\cdot  \Gobs.
      \]
	When masking is applied and the computation is performed in parallel on the shares, 
	we obtain $c_{w_1}= c_{u_1}$ and $c_{w_i}=c_{w_{i-1}} + c_{u_i}$. The associated
	$[8 (n_0+1), 4k_0]_\F$ product code is $\Clin=\Cobs \otimes \Cbase$. 
  \subsubsection{Example: Linear Feedback Shift Register.}\label{ex:lfsr}
	Let $s\in\F$ be a secret and let $\alpha\in\F$ be a fixed constant. A register $r_w$ is 
  initialized with $r_w=s$ and updated by $r_w=\alpha r_w$ for $i=1,\ldots,N-1$. 
  Let $w_0=s,w_1,\dots,w_{N-1}$ be the successive values of the register $r_w$. 
  The associated $[N,1]_{\F}$ computation code for $N=5$ has the following parity-check and generator matrices
\[
\Hobs =
        \kbordermatrix{
          &s&w_1&w_2&w_3&w_4\\
&\alpha & -1 & 0 & \cdots & 0 \\
&0 & \alpha & -1 & \ddots & \vdots \\
&\vdots & \ddots & \ddots & \ddots & 0 \\
&0 & \cdots & 0 & \alpha & -1
},
\quad
\Gobs =
        \kbordermatrix{
          &s&w_1&w_2&w_3&w_4\\
&1 & \alpha & \alpha^2 & \cdots & \alpha^S
}.\]
% Thus the full  vector is $(u_1,w_2,\ldots,w_{S+1})\in F^{S+1}$ can be written as
% \[
% w
% =
% (u_1,w_2,\ldots,w_{S+1})
% =
% u_1\Gobs.
% \]
Given a base code $\Cbase$,
the induced $[Nn_0,k_0]_{\mathbb{F}}$ product code  $\Clin = \Cobs \otimes \Cbase$ has codewords $c_{w_i}$ such that 
$c_{w_0} = c_s$
and
$c_{w_i} = \alpha c_{w_{i-1}}$,
 $i=1,\ldots,N-1$.

 \subsection{Repair Problem for General Linear Computations}
 For a codeword 
 $c=
 (c_1,\dots,c_{N})\in\Clin$, the repair problem  is to recover the independent input values $c_{1,0},\dots,c_{K,0}$. 
 In the context of secret sharing, this corresponds to recovering secret data. 
 In this setting, we need
 to find dual codewords with suitable properties.
 Namely, we look for a matrix $H_{Km}\in\F^{Km\times N(\nbase+1)}$ consisting of
        $Km$ dual codewords $h^{(1)},\ldots,h^{(Km)}\in \Dlin$:
		\[
		\begin{aligned}
		H_{Km}
		&=
		\begin{bmatrix}
		h^{(1)}\\
		h^{(2)}\\
		\vdots\\
		h^{(Km)}
		\end{bmatrix}
		=
		\begin{bmatrix}
		h_{1,0}^{(1)} & \cdots & h_{1,n_0}^{(1)}
		& \cdots
		& h_{N,0}^{(1)} & \cdots & h_{N,n_0}^{(1)}\\
		\vdots &  & \vdots
		& 
		& \vdots &  & \vdots\\
		h_{1,0}^{(Km)} & \cdots & h_{1,n_0}^{(Km)}
		& \cdots
		& h_{N,0}^{(Km)} & \cdots & h_{N,n_0}^{(Km)}
		\end{bmatrix}.
		\end{aligned}
		\]
		% \begin{bmatrix}
		% 	h_{0,0}^{(1)}  & \cdots & h_{0,n_0}^{(1)} 
		% 	& h_{1,0}^{(1)}& \cdots & h_{1,n_0}^{(1)} & \cdots
		% 	& h_{N-1,0}^{(1)}  & \cdots & h_{N-1,n_0}^{(1)}\\
		% 	h_{0,0}^{(2)}  & \cdots & h_{0,n_0}^{(2)} 
		% 	& h_{1,0}^{(2)} & \cdots & h_{1,n_0}^{(2)} & \cdots
		% 	& h_{N-1,0}^{(2)}  & \cdots & h_{N-1,n_0}^{(2)}\\
		% 	\vdots  & \ddots & \vdots 
		% 	& \vdots  & \ddots & \vdots & \cdots
		% 	& \vdots & \ddots & \vdots\\
		% 	h_{0,0}^{(Km)} & \cdots & h_{0,n_0}^{(Km)}
		% 	& h_{1,0}^{(Km)}  & \cdots & h_{1,n_0}^{(Km)} & \cdots
		% 	& h_{N-1,0}^{(Km)} & \cdots & h_{N-1,n_0}^{(Km)}\\
		% \end{bmatrix}.
		We first give the full-rank condition. The matrix $A_{H_{km}}\in\F^{Km\times K}$ defined by
                                        \begin{equation}
A_{H_{km}}=\begin{bmatrix}
             h_{1,0}^{(1)}&h_{2,0}^{(1)}&\cdots&h_{N,0}^{(1)}\\
            \vdots &\vdots& &\vdots\\
            h_{1,0}^{(Km)}&h_{2,0}^{(Km)}&\cdots&h_{N,0}^{(Km)}
            \end{bmatrix}\cdot \Gobs^T\label{mat:A:lin}
\end{equation}
 must satisfy (after row expansion):
        \begin{equation}
          \rank_{\B}\! (A_{H_{km}}^\B)=K\cdot m.\label{eq:full_rank:gen}
          \end{equation}
		The second condition (one dimension columns) is, for each pair $ (i,j)\in [N] \times [n_0]$:
		\begin{equation}
		\begin{aligned}
		\dim_{\B}\!\left(
		\operatorname{span}_{\B}
		\left( 
			h_{i,j}^{(1)},\ldots,h_{i,j}^{(K\cdot m)}
		\right)
		\right)
		&= 1.\\
		\end{aligned}
		\label{eq:one-dim:gen}
		\end{equation}

		\begin{theorem}\label{thm:ex:lers:gen}
			Let $\Clin= \Cobs \otimes \Cbase$ be the product code of $\Cbase$ and $\Cobs$ as  above. 
			Then the existence of a matrix $H_{K m}$ that satisfies \eqref{eq:full_rank:gen}
			 and \eqref{eq:one-dim:gen} implies the existence of an equally distributed 
			 LERS for $C$.
		\end{theorem}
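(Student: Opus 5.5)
The plan mirrors the proof of Theorem~\ref{thm:existence_add_gate}, with the role of the sums $h_{u,0}+h_{w,0}$, $h_{v,0}+h_{w,0}$ taken over by the entries of $A_{H_{km}}$. The basic observation is that for every $c\in\Clin$ the vector of secret coordinates is $(c_{1,0},\ldots,c_{N,0})=(u_1,\ldots,u_K)\cdot\Gobs$. Hence, for each dual codeword $h^{(\ell)}$,
\[
\sum_{i=1}^N h_{i,0}^{(\ell)}c_{i,0}=(h_{1,0}^{(\ell)},\ldots,h_{N,0}^{(\ell)})\,\Gobs^T (u_1,\ldots,u_K)^T=\sum_{t=1}^K a_{\ell,t}\,u_t,
\]
where $a_{\ell,t}$ is the $(\ell,t)$ entry of $A_{H_{km}}$ in~\eqref{mat:A:lin}. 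The orthogonality relation $h^{(\ell)}\cdot c^T=0$ therefore reads
\[
-\sum_{t=1}^K a_{\ell,t}u_t=\sum_{i=1}^N\sum_{j=1}^{\nbase}h_{i,j}^{(\ell)}c_{i,j}.
\]

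First, I use the full-rank condition~\eqref{eq:full_rank:gen}. Since $A_{H_{km}}^\B\in\B^{Km\times Km}$ is invertible, I apply $\B$-linear row operations to it. The same $\B$-combinations of the $h^{(\ell)}$ are again in $\Dlin$, since $\Dlin$ is $\F$-linear and hence $\B$-linear. Row operations preserve the one-dimensional column condition~\eqref{eq:one-dim:gen} in the ``at most one'' sense: a $\B$-combination of elements of $\operatorname{span}_\B(\beta_{i,j})$ stays in that line. So I may assume the rows are indexed by pairs $(t,s)\in[K]\times[m]$ with $a_{(t,s),t'}=\delta_{t,t'}\zeta_s$. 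The row expansion of $\delta_{t,t'}\zeta_s$ is exactly the standard basis, which is what the $\B$-row reduction of $A^\B$ to the identity delivers.

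Second, by~\eqref{eq:one-dim:gen} I fix, for each $(i,j)\in[N]\times[\nbase]$, an element $\beta_{i,j}\in\F\setminus\{0\}$ and scalars $\lambda_{i,j}^{(\ell)}\in\B$ with $h_{i,j}^{(\ell)}=\lambda_{i,j}^{(\ell)}\beta_{i,j}$. The crucial point is that $\beta_{i,j}$ depends only on the coordinate, not on $\ell$. For the row $(t,s)$, applying $\Tr_{\F/\B}$ and using its $\B$-linearity gives
\[
-\Tr_{\F/\B}(u_t\zeta_s)=\sum_{i=1}^N\sum_{j=1}^{\nbase}\lambda_{i,j}^{(t,s)}\Tr_{\F/\B}(c_{i,j}\beta_{i,j}).
\]

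Third, I define the leakage functions $g_{i,j}(x)=\Tr_{\F/\B}(x\beta_{i,j})$, one $\B$-symbol per share, $N\nbase$ in total and equally distributed. The reconstruction map is the $\B$-linear map sending these values to the traces $\Tr_{\F/\B}(u_t\zeta_s)$ for $s\in[m]$. Nondegeneracy of the trace form then recovers each $u_t$, and hence also all $w_i$ via $\Gobs$.

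I expect the main obstacle to be bookkeeping rather than mathematics, specifically justifying the first step. One must check that $\rank_\B(A^\B)=Km$ really allows the targeted normal form $(\delta_{t,t'}\zeta_s)$, i.e.\ that row expansion is compatible with forming $\B$-combinations of dual codewords. One must also check that the one-dimensionality survives these operations, noting that a column might become zero, which is harmless with a fixed $\beta_{i,j}$.
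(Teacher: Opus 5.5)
Your proposal is correct and follows essentially the same route as the paper's proof: you rewrite the secret part of each parity check through $\Gobs^T$, row-reduce $A_{H_{km}}^\B$ to the identity to obtain dual codewords isolating $u_t\zeta_s$, and then apply the trace using the $\beta_{i,j}$, which do not depend on the row index. Your explicit check that the one-dimensional column condition survives the $\B$-row operations is a step the paper leaves implicit; note also that the reduction matrix is invertible, so each column's $\B$-span is preserved exactly and no column can become zero.
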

		\begin{proof}
			For $c\in \Clin$ and $h=(h_1,\ldots,h_{N}) \in \Dlin,$ 
			 it holds that 
                        \[
                          \sum_{j=1}^{N}  \left(c_{j,0} h_{j,0}+\underline c_j \underline h_j^T\right)=0.
                        \]
                        With $\left(c_{1,0},\dots,c_{N,0}\right)=(c_{1,0},\dots,c_{K,0})\cdot \Gobs$, we obtain
                        \begin{align*}
                          \sum_{j=1}^{N} c_{j,0} h_{j,0}&=\left(c_{1,0},\dots,c_{N,0}\right)\cdot \left(h_{1,0},\dots,h_{N,0}\right)^T\\
                          &=(c_{1,0},\dots,c_{K,0})\cdot \Gobs\cdot \left(h_{1,0},\dots,h_{N,0}\right)^T\\
                          &=(c_{1,0},\dots,c_{K,0})\cdot \left(\left(h_{1,0},\dots,h_{N,0}\right)\cdot \Gobs^T\right)^T.
                        \end{align*}
                        From condition \eqref{eq:full_rank:gen},
                        the matrix $A_{H_{Km}}^\B$ can be reduced with $\B$-linear row operations to
                        $\textbf{1}_{Km\times Km}$, i.e.\  $A_{H_{Km}}$ can be reduced to
                        \[
                          \kbordermatrix{
                            & u_1&\dots &u_K\\
                            &\zeta_1  & \\
                            &\vdots    & \\ 
                            &\zeta_m  & \\
                            &&\ddots\\
                            &&\ddots\\
                            &&\ddots\\
                            &&&\zeta_1  & \\
                            &&&\vdots    & \\ 
                            &&&\zeta_m  & \\
                            }.
                          \]
                        Thus, for a given $j_0\in [K]$ and $i\in[m]$, there is $h^{(i)}=(h^{(i)}_{1},\dots,h^{(i)}_{N})\in \Dlin$ 
                       such that
                       \[
                         (h^{(i)}_{1,0},\dots,h^{(i)}_{N,0})\cdot\Gobs ^T
						 =(0_{1\times m},\dots,0_{1\times m},\overbrace{0,\dots,\zeta_i,\dots,0}^{j_0\text{-th block}},\dots,0_{1\times m}).
                       \]
                       Hence,
                          \[
                            (c_{1,0},\dots,c_{K,0})\cdot \left((h^{(i)}_{1,0},\dots,h^{(i)}_{N,0})\Gobs ^T\right)^T=c_{j_0,0}\zeta_i.
                          \]
						  % For each $j\in [N]$, we write $\underline{c}_j=(c_{j,1},\ldots,c_{j,\nbase})$ and
						  %       $\underline{h}^{(i)}_j=(h^{(i)}_{j,1},\ldots,h^{(i)}_{j,\nbase}).$
							From~\eqref{eq:one-dim:gen}, for $i\in[m]$, and $(j,\ell)\in [N]\times[\nbase]$, there exists  $\beta_{j,\ell}\in\F$ and  $\lambda^{(i)}_{j,\ell}\in\B$  such that 							$h^{(i)}_{j,\ell}=\lambda^{(i)}_{j,\ell}\beta_{j,\ell}$.
							Using this decomposition and applying the trace  gives
						    \begin{align*}
                             -c_{j_0,0}\zeta_i&               =\sum_{j=1}^{N} \sum_{\ell=1}^{\nbase} c_{j,\ell} \lambda^{(i)}_{j,\ell}\beta_{j,\ell},\\
                             -\Tr_{\F/\B}(c_{j_0,0}\zeta_i) &              =\sum_{j=1}^{N} \sum_{\ell=1}^{\nbase} \lambda^{(i)}_{j,\ell}\Tr_{\F/\B}(c_{j,\ell}\beta_{j,\ell}).
                          \end{align*}
			The $N\times\nbase$ leakage functions
			$c_{j,\ell}\mapsto \Tr_{\F/\B}(c_{j,\ell}\beta_{j,\ell})$ 
			are independent of $i$. Varying~$i$, the $m$ values $\Tr_{\F/\B}(c_{j_0,0}\zeta_i)$ 
			can be determined as linear combinations of these leakage functions, 
			with coefficients $\lambda^{(i)}_{j,\ell}$. 
			Since $(\zeta_1,\dots,\zeta_m)$ is a basis of $\F$, these $m$ values determine $c_{j_0,0}$. \qed
		\end{proof}

	\subsection{Subfield Subcode Analysis}
			
		We again study how to construct a matrix $H_{Km}$ that satisfies the full-rank condition \eqref{eq:full_rank:gen}
		and the one-dimensional column-span conditions \eqref{eq:one-dim:gen}. Due to Theorem \ref{thm:ex:lers:gen}, it will give 
		an equally distributed LERS for the product code $\Clin$.

		We first focus on one-dimensional
		column-span conditions~\eqref{eq:one-dim:gen}. We use the same
		approach as in the previous section. Analogously to the add case, consider the puncturing of the dual code $\Dlin$
		at the set of coordinates 
		$I_0,$
		and denote it $\underline{\Dlin}$.
		%Furthermore, denote the coordinate set of $\underline{\Dlin}$ by $J=\{0,\ldots,N(\nbase+1)-1\}\setminus I_0.$
		% Note that the sets $I_0$ and $J$ are defined analogously to the corresponding sets in
		% the previous section, but refer to the current product-code construction.
		Given the generator matrix $\Gbase$ of $\Cbase$ as in
		\eqref{eq:Gbase}, we obtain the following generator matrix for the product code
		$\Clin$:
		\[
		\Glin=\Gobs\otimes\Gbase
		=\Gobs\otimes
		\begin{bmatrix}
		-1 & \phi\\
		0 & \overline{\Gbase}
		\end{bmatrix},
		\]
		which is  a parity-check matrix of $\Dlin$. One can see that the parity-check matrix
		of the punctured code $\underline{\Dlin}$ is given by
		\[
		\overline{\Glin}=\Gobs\otimes\overline{\Gbase}.
		\]
		Furthermore, we have that
                the dimension of the punctured code $\underline{\Dlin}$ is
		\begin{equation}\label{eq:dim:punc:Dlin}
			\dim_\F(\underline{\Dlin})=N\nbase-K(\kbase-1).
		\end{equation}
		The proof of \eqref{eq:dim:punc:Dlin} can be found in Appendix \ref{appendix:proof_dim}.
			We start by analyzing which punctured dual codewords $\underline{h}^{(1)},\ldots,\underline{h}^{(Km)}\in\underline{\Dlin}$ satisfy the 
			one-dimensional column-span conditions \eqref{eq:one-dim:gen}. For each $j\in J$, let $\beta_j\in\F$ be non-zero such that the  $j$-th column of
			$H_{Km}$ is contained in the $\B$-span of $\beta_j$. Consider 
			\[
			\underline{\Dlinpb} := \left\{ \underline{h} \in \underline{D} :
			h_{j}\in \operatorname{span}_{\mathbb{B}}(\beta_{j})\ \text{ for all } j \in J \right\},
			\]
			which is a $\B$-linear code of length $N\nbase$ over the alphabet $\F$.
			Since $\beta_j\neq 0$ for all $j\in J$, there exists for each codeword $\underline h\in\underline{\Dlinpb}$ a unique vector
			$
			\underline\lambda
			\in\B^{Nn_0}
			$
			such that 
			$h_{j}=\lambda_{j}\beta_{j},$ for all $j\in J.$
			This observation motivates the definition
			\[
			\Dlambdabeta
			:=
			\left\{
			\underline \lambda \in\mathbb B^{N\nbase}
			\;\middle|\;
			\exists\,\underline h\in\underline{\Dlinpb}
			\text{ such that }
			h_{j}=\beta_{j}\lambda_{j}
			\text{ for all }j\in J
			\right\}.
			\]			
			Then $\Dlambdabeta\subseteq \B^{N\nbase}$ is a
                        $\B$-linear code over the alphabet $\B$. Let the diagonal matrix of size
                        $Nn_0 \times Nn_0$ be given by
                        $M_{\beta^{-1}}:=Diag(\beta_j^{-1})_{j\in J}.$ We can write
                        $\underline{\Dlinpb}=\Dlambdabeta
                        M_{\beta}$.
                        Then $\Dlambdabeta$ is a subfield subcode of
                        $\underline{\Dlinpb} M_{\beta^{-1}}$ (see the proof in Appendix \ref{appendix:subfieldsubcode}).
				To prove the existence of punctured dual codewords in
				$\underline{\Dlin}$ satisfying the one-dimensional constraints described in
				\eqref{eq:one-dim:gen}, it suffices to show that the subfield subcode
				$\Dlambdabeta$ has dimension at least $Km$. We obtain a lower bound on its
				dimension by applying the bound for subfield subcodes:
				$\dim_{\B}(\Dlambdabeta)
				\geq
				n(\underline{\Dlinpb})
				-
				m\,r(\underline{\Dlinpb})=
				N\nbase-mK(\kbase-1),$ where we used~\eqref{eq:dim:punc:Dlin}.
				In particular, a sufficient condition for $\dim_{\B}(\Dlambdabeta)\geq Km$ is
				\begin{equation}\label{eq:first:gen:dim:thres}
					\kbase\leq\frac{N\nbase}{Km}.
				\end{equation}
				Hence, if the upper bound~\eqref{eq:first:gen:dim:thres} for the dimension is satisfied,
				then there exist punctured dual codewords in
				$\underline{\Dlin}$ satisfying conditions
				\eqref{eq:one-dim:gen}.

		\subsubsection{Fulfilling the Full-Rank Condition.}
		Given $\underline{h}=(\underline{h_1},\ldots,\underline{h_{N}})
		\in\underline{\Dlinpb}\subseteq\underline{\Dlin},$
		we now address the full-rank condition~\eqref{eq:full_rank:gen}. In
		particular, for $\phi$ in $\Gbase$ as in~\eqref{eq:Gbase}, we require the following $K$ values
		\[
		\begin{array}{c}
			(\phi\cdot(\underline{h_1^T},\ldots,\underline{h_{N}^T}))
			\otimes \Gobs[1,\cdot],\\
			\vdots\\
			(\phi\cdot(\underline{h_1^T},\ldots,\underline{h_{N}^T}))
			\otimes \Gobs[K,\cdot]
		\end{array}
		\]
		to be linearly independent, where $\Gobs[i,\cdot]$ denotes the $i$th row of $\Gobs$. 
		\begin{definition}[Generalized extension map]
			% For  the dual  $\Dlin$ of the product code $\Cobs\otimes \Cbase$ induced by a computation described by
			% $\Cobs$, we define the associated extension map as
			For  the dual  $\Dlin$ of the product code $\Cobs\otimes \Cbase$, we define the associated extension map as
			\begin{equation*}
			\phi_{\Dlin} \colon \begin{array}[t]{rcl}
				\underline{\Dlin}& \longrightarrow &\F^K \\
				(\underline{h_1},\ldots,\underline{h_{N}})& \longmapsto &
					\left((\phi \cdot(\underline{h_{1}^T},\ldots,\underline{h_{N}^T}))\otimes \Gobs[i,\cdot]\right)_{i\in[K]}.
			\end{array} 
			\end{equation*}
		\end{definition}
		Applying this map to $\underline{h}\in\underline{\Dlinpb}$, and writing, for $j\in[\nbase],$
		\[
			h_{1,j}=\lambda_{1,j}\beta_{1,j},
			\ldots,
			h_{N,j}=\lambda_{N,j}\beta_{N,j},
		\]
		with $\underline{\lambda}=(\underline{\lambda_1},\ldots,\underline{\lambda_{N}})\in\Dlambdabeta,$
		leads to the following $\B$-linear map
		\begin{equation*}%\label{eq:phi:Lambda_gen}
		\psi_{\Dlambdabeta}:
		\begin{array}[t]{rcl}
		\Dlambdabeta&\longrightarrow& \F^K\\
		\underline{\lambda}
		&\longmapsto&
		\left(
		(\underline{\phi}
		\cdot
		(\underline{\lambda_1^T}\odot\beta_1^T,\ldots,
		\underline{\lambda_{N}^T}\odot\beta_{N}^T))
		\otimes \Gobs[i,\cdot]
		\right)_{i\in[K]}
		\end{array}
		\end{equation*}
		where the $\odot$-operation denotes the componentwise multiplication.
		If
		\[
		\dim_{\B}(\operatorname{Im}(\psi_{\Dlambdabeta}))=Km,
		\]
		then there exist
		$\underline{h}^{(1)},\ldots,\underline{h}^{(Km)}\in\Dlambdabeta$
		such that their images under $\psi_{\Dlambdabeta}$
		form a basis of $\F^K$ over $\B.$ This requires $\dim_{\B}(\Dlambdabeta)\geq Km,$ which is ensured by 
		condition~\eqref{eq:first:gen:dim:thres}.
		However, this does not imply that
		$\dim_{\B}\bigl(\operatorname{Im}(\psi_{\Dlambdabeta})\bigr)= Km.$
		Again, determining this dimension amounts to computing the dimension of a subfield
		subcode, which is difficult in general. We therefore only state a rank criterion.
		Let
		$M_{i,\beta}=\operatorname{Diag}\bigl((\beta_j)_{j\in J_i}\bigr)$ for $i\in[N]$.
		Then
		$\dim_{\B}\bigl(\operatorname{Im}(\psi_{\Dlambdabeta})\bigr)=Km$
		is equivalent to
		\begin{equation}\label{eq:criterion:rand:meth}
			\operatorname{rank}_{\B}\!\left(H^0_{\Dlambdabeta}\right)	-
			\operatorname{rank}_{\B}\!\left(H_{\Dlambdabeta}\right)=Km,
		\end{equation}
		where
		$H_{\Dlambdabeta}
		=
		\left(
		(\Gobs\otimes\overline{\Gbase})
		\operatorname{Diag}
		(M_{1,\beta},\ldots,M_{N,\beta})
		\right)_{\B}
		\in
		\B^{mK(\kbase-1)\times N\nbase}$
		is a parity-check matrix for $\Dlambdabeta$, and
		\begin{equation*}
		H^0_{\Dlambdabeta}
		=
		\left(
		\left(
		\Gobs\otimes
		\begin{bmatrix}
		\phi\\
		\overline{\Gbase}
		\end{bmatrix}
		\right)
		\operatorname{Diag}
		(M_{1,\beta},\ldots,M_{N,\beta})
		\right)_{\B}
		\end{equation*}
		is a parity-check matrix for $\ker(\psi_{\Dlambdabeta})$.    
		 Criterion~\eqref{eq:criterion:rand:meth} allows to verify that the vectors
                $\beta_i,\ldots,\beta_{N}\in\F^{N\times n_0}$ provide dual codewords in $\Dlin$ satisfying \eqref{eq:full_rank:gen}.

\subsection{Numerical Examples}
\subsubsection{Example: Summation of an Array.}
Recall the summation of an array from Example~\ref{ex:array1}. 
Then, the base code $\Cbase$ can have dimension at most
$\kbase\leq2\nbase/m$.
		% \begin{equation*}
		% 	H^0_{\Dlambdabeta}
		% 	=
		% 	\begin{bmatrix}
		% 	(\phi M_{0,\beta})_{\B}  & 0 & 0 & 0 & 
		% 	(\phi M_{4,\beta})_{\B} & (\phi M_{5,\beta})_{\B} & 
		% 	(\phi M_{6,\beta})_{\B} & (\phi M_{7,\beta})_{\B} \\
		% 	(\overline{\Gbase}M_{0,\beta})_{\B} & 0 & 0 & 0 & 
		% 	(\overline{\Gbase}M_{4,\beta})_{\B} & (\overline{\Gbase}M_{5,\beta})_{\B} & 
		% 	(\overline{\Gbase}M_{6,\beta})_{\B} & (\overline{\Gbase}M_{7,\beta})_{\B} \\
		% 	0 &(\phi M_{1,\beta})_{\B}  & 0 & 0 & 
		% 	0 & (\phi M_{5,\beta})_{\B} & 
		% 	(\phi M_{6,\beta})_{\B} & (\phi M_{7,\beta})_{\B} \\
		% 	0 & (\overline{\Gbase}M_{1,\beta})_{\B} & 0 & 0 & 
		% 	0 & (\overline{\Gbase}M_{5,\beta})_{\B} & 
		% 	(\overline{\Gbase}M_{6,\beta})_{\B} & (\overline{\Gbase}M_{7,\beta})_{\B} \\
		% 	0 & 0 & (\phi M_{2,\beta})_{\B}  & 0 & 
		% 	0 & 0 & 
		% 	(\phi M_{6,\beta})_{\B} & (\phi M_{7,\beta})_{\B} \\
		% 	0 & 0 & (\overline{\Gbase}M_{2,\beta})_{\B} & 0 & 
		% 	0 & 0 & 
		% 	(\overline{\Gbase}M_{6,\beta})_{\B} & (\overline{\Gbase}M_{7,\beta})_{\B} \\
		% 	0 & 0 & 0 & (\phi M_{3,\beta})_{\B}  & 
		% 	0 & 0 & 
		% 	0 & (\phi M_{7,\beta})_{\B} \\
		% 	0 & 0 & 0 & (\overline{\Gbase}M_{3,\beta})_{\B} & 
		% 	0 & 0 & 0 & (\overline{\Gbase}M_{7,\beta})_{\B}
		% 	\end{bmatrix}.
		% \end{equation*}

\subsubsection{Example: Linear Feedback Shift Register.}
Using the parameters from Example \ref{ex:lfsr}, the upper bound on the dimension $\kbase$ of the base code yields
$\kbase\leq N\cdot\nbase/m$.
% With $N=m$, we can have $\kbase=\nbase$. We explain this as
% follows. For a general code, let $s=c_0$ be the secret, and
% $c_1,\dots,c_{\nbase}$ the shares of $c_0$. Consider the use of the same  
% leakage map: $x\mapsto\Tr_{\F/\B} (x)$. For $j\in[\nbase]$,
% the leakage of $c_j$ after $N=m-1$ iterations is
% $\Tr_{\F/\B} (c_j),\Tr_{\F/\B} (\alpha c_j)\dots,\Tr_{\F/\B}
% (\alpha^{m-1}c_j)$. For $\alpha$ a generator of $\F$, these values
% determines $c_j$, for any $j\in[\nbase]$, and $c_0=s$ is found.
With $N=m$, we can have $\kbase=\nbase$. We explain this as follows.
For a general code, let $s=c_0$ denote the secret, and let
$c_1,\dots,c_{\nbase}$ be the shares of $c_0$. Consider using the same
leakage map for every share: $x \longmapsto \Tr_{\F/\B}(x).$
For each $j\in[\nbase]$, the leakage of $c_j$ after $m-1$ iterations is
$\Tr_{\F/\B} (c_j),\Tr_{\F/\B} (\alpha c_j)\dots,\Tr_{\F/\B}
(\alpha^{m-1}c_j)$. 
If $\alpha$ is chosen as generator of $\F,$ then these trace values uniquely
determine $c_j$. Hence every share $c_j$ can be recovered, and
consequently $c_0=s$.
% We first remark that this covers additive secret sharing, when all
% the shares sum up to secret $s$. Our second remark is that, contrarily
% to the add case, we can use the same leakage functions (same $\beta_j$'s) at each step of the computation.
We remark that this argument applies to additive
secret sharing, where the shares sum up to the secret $s$.
% In this case, the construction
% reduces to the additive secret-sharing setting.
% In contrast, if $N=\ceil{m\left(1+\frac{1}{\nbase}\right)},$
% then the uncoded choice $\kbase=\nbase+1$ is not valid. Indeed, the base code
% does not contain a dual codeword that is nonzero in the first coordinate.
% Equivalently, there exists no systematic generator matrix of $\Cbase$
% for which $\phi\neq 0$.
\section{Use of Identical Leakage Functions}\label{sec:samebetas}
In general, there are $N$ sets of $\beta_j$'s defining the leakage
functions $\Tr_{\F/\B}(\beta_j\cdot c_j)$: for each $i\in[N]$, the
$j$-th symbol of $c_i$ is associated with
$\beta_{i,j}$. Our randomized method found, with overwhelming
probability, values $\beta_{1,j},\ldots,\beta_{N,j}$ satisfying
\eqref{eq:criterion:rand:meth}, thereby yielding a linear repair
scheme. Recall, however, that these values are chosen independently
and uniformly at random. A more realistic attack model requires the
leakage functions to coincide across all $i\in[N]$, i.e.,
$\beta_{1,j}=\ldots=\beta_{N,j}=\beta_j$ for $j\in[\nbase]$.

Imposing this restriction on the simple addition computation
(introduced in \ref{ch:add_gate}), i.e., setting
$\beta_{u,j}=\beta_{v,j}=\beta_{w,j}=\beta_j$ for all $j\in[\nbase]$,
our simulations \emph{never found} a set of $\beta_j$'s satisfying
\eqref{eq:rkcond:Im:psi} under the improved dimension bound. This
outcome can be explained as follows. Suppose, for contradiction, that
\eqref{eq:rkcond:Im:psi} holds when
$\beta_{u,j}=\beta_{v,j}=\beta_{w,j}$ for all $j\in[\nbase]$. Then the
matrix in \eqref{eq:H_Lambdabeta} has $\B$-rank $2m$ greater than that
of the matrix in \eqref{eq:H_Lambdabeta0}; explicitly,
\begin{equation*}
\begin{bmatrix}
	(\phi M_{u,\beta})_{\B} & \mathbf{0} &
	(\phi M_{w,\beta})_{\B}\\
	(\overline{\Gbase}M_{u,\beta})_{\B} & \mathbf{0} &
	(\overline{\Gbase}M_{w,\beta})_{\B}
\end{bmatrix}
\end{equation*}
has $\B$-rank $m$ greater than that of
\begin{equation*}
\begin{bmatrix}
	(\overline{\Gadd_0}M_{u,\beta})_{\B} & \mathbf{0} &
	(\overline{\Gadd_0}M_{w,\beta})_{\B}
\end{bmatrix}.
\end{equation*}
Since $M_{u,\beta}=M_{w,\beta}=M_\beta$, it follows that
\begin{equation*}
\begin{bmatrix}
	(\phi M_{\beta})_{\B}\\
	(\overline{\Gbase}M_{\beta})_{\B}
\end{bmatrix}
\end{equation*}
has $\B$-rank $m$ greater than that of $(\overline{\Gbase}M_{\beta})_{\B}$.
This is precisely the rank condition~\eqref{eq:CNS:t} required for the
base code $\Cbase$. Hence, any repair scheme for the product code $C$
that uses identical leakage functions induces a repair scheme for
$\Cbase$ itself; consequently, no improvement over the base code is
possible, and the upper bound on $\kbase$ cannot be improved by
exploiting the additive relation alone.

The same argument applies to the Guruswami--Wootters~\cite{STOC:GurWoo16}
algebraic LERS for Reed--Solomon codes: under the simple addition, any
scheme using identical $\beta_j$'s for all three codewords reduces to a
LERS for the base Reed--Solomon code. In cryptographic terms, for
Shamir secret sharing, this means that exploiting the simple additive
relation requires leakage functions other than those of
\cite{STOC:GurWoo16}, if such functions exist. 
This observation also seems to extend directly to the array summation example~\ref{ex:array1}.

However, for the linear relation $\mu_u u + \mu_v v = w$ with
$\mu_u\neq\mu_v \in \F\setminus\{0,1\}$, our simulation \emph{finds},
with overwhelming probability, a set of $\beta_j$'s satisfying
\eqref{eq:criterion:rand:meth} under the improved dimension bound. It seems that the same
holds for the sum $\mu_1u_1+\cdots+\mu_Ku_K$ for $K$ 
coefficients $\mu_i\in\F\setminus\{0,1\}.$ Finally, also for the LFSR
example~\ref{ex:lfsr}, identical leakage functions can be reused at
every step of the computation.

\section{Conclusion}

Our main contribution is the construction of leakage functions, based on linear exact repair schemes, for
Massey secret sharing schemes when linear
computations are performed. More precisely, we
consider the product code $ \Clin = \Cobs \otimes \Cbase,$
where the $[\nbase,\kbase]_\F$ base code $\Cbase$ induces the
underlying Massey secret sharing scheme, and the $[N,K]_\F$ computation code
$\Cobs$ represents linear computations among the $N$ secrets, of which
$K$ are linearly independent (the inputs). Our construction
yields leakage functions for $\Clin$ with high probability whenever $\kbase \leq N\nbase/(Km),$
which improves the upper bound $\kbase \leq \nbase/m-1$
of the base case of~\cite{wcc-paper}.

It is somewhat intriguing that the (standard share-wise) addition,
which has strong security properties (Strong Non-Interfering, SNI,~\cite{CCS:BBDFGS16}) in the
probing model, presents a weakness in the LERS-based leakage
model. We also point out that, in cryptography, our
results do not hold when the shares are refreshed, since any
redundancy introduced by the computation code is then lost.
Moreover, while identical leakage functions cannot exploit simple addition, 
they can exploit more general linear relations, yielding a more realistic 
leakage attack and potentially leading to a security weakness of Massey secret sharing schemes.

Future work includes investigating multiplication. 
% when the
% threshold (corresponding to $\kbase-1$ in coding-theoretic terms)
% satisfies $ 2t \leq \nbase+1.$
Such an analysis would rely on the Schur product of codes~\cite{C:PCCX09}.

% This sample uses bibtex rather than biblatex.

% NOTES
% - Download abbrev3.bib and crypto.bib from https://cryptobib.di.ens.fr/
% - Use biblio.bib for additional references not in the cryptobib database.
%   If possible, take them from DBLP.

\printbibliography

\appendix

\section{Proof of~Eq.~\ref{eq:dim:punc:Dlin}}\label{appendix:proof_dim}

\subsection{Matrix-based Proof}

A redundant parity-check matrix of $\Clin$ (equivalently, a generator
matrix of $\Dlin$) is given by
\[
\left[
\begin{array}{c}
\mathbf{1}_N \otimes \Hbase\\
[2ex]
\hline
\Hobs\otimes \mathbf{1}_{\nbase+1}
\end{array}
\right],
\]
which contains $(2N-K)(\nbase+1)-N\kbase$ rows and
$N(\nbase+1)$ columns.
Since the submatrix formed by the last $N-K$ columns of $\Hobs$ is lower
triangular, redundant rows can be removed, such that we obtain the following
equivalent full row rank parity-check matrix:
\[
\Hlin=
\left[
\begin{array}{c|c}
\mathbf{1}_K \Hbase & \mathbf{0}_{K\times(N-K)}
\\ \hline
\star &
\begin{matrix}
\star & 0 & \cdots & 0\\
\vdots & \ddots & \ddots & \vdots\\
\vdots & & \ddots & 0\\
\star & \cdots & \cdots & \star
\end{matrix}
\end{array}
\right],
\]
which contains only $N(\nbase+1)-K\kbase$ rows.
Finally, puncturing $H$ on all positions in $I_0$ gives
\[
\left[
\begin{array}{c|c}
\mathbf{1}_K & \mathbf{0}_{K\times(N-K)}
\\ \hline
\mathbf{0}_{(N-K)\times K} & 
\mathbf{0}_{(N-K)\times(N-K)}
\\ \hline
\star &
\begin{matrix}
\star & 0 & \cdots & 0\\
\vdots & \ddots & \ddots & \vdots\\
\vdots & & \ddots & 0\\
\star & \cdots & \cdots & \star
\end{matrix}
\end{array}
\right],
\]
which contains $(N-K)$ zero rows. Removing these redundant rows yields the
parity-check matrix $\underline{\Hlin}$, which contains
$N\nbase-K(\kbase-1)$ rows. Consequently, the dimension of the punctured code $\underline{\Dlin}$ is
\begin{equation}
		\dim_\F(\underline{\Dlin})=N\nbase-K(\kbase-1).
\end{equation}
Thus, puncturing the code at the positions $I_0$ reduces its dimension by
$N-K$. Furthermore, this observation is consistent with the behavior of the
addition.

\subsection{Proof via the Abstract Duals of a Tensor Product Code}

The dual of the tensor-product code $\Cobs\otimes\Cbase$ is given by
\[
D
=
\Cobs^\perp\otimes \F^{\nbase+1}
+
\F^N \otimes\Cbase^\perp.
\]
A proof can be found in \cite[Lemma 4.1.2]{rozendaal:ths25}.
Note that this is not a direct sum. Hence, we can recover the dimension with 
\begin{align*}
\dim_{\F} (D)
&=
\dim_{\F}\left(\Cobs^\perp\otimes \F^{\nbase+1}\right)
+
\dim_{\F}\left(\F^N \otimes\Cbase^\perp \right)
-
\dim_{\F}\left(\Cobs^\perp\otimes\Cbase^\perp\right)
\\
&=
(\nbase+1)(N-K)
+
(\nbase+1-\kbase)N
-
(\nbase+1-\kbase)(N-K)
\\
&=
(\nbase+1)N-\kbase K.
\end{align*}
Puncturing at the positions in $I_0$ gives
\[
\underline{D}
=
\Cobs^\perp\otimes \F^{\nbase+1}
+
\F^N \otimes\underline{\left(\Cbase^\perp\right)}.
\]
Again, the sum is not direct. Therefore,
\begin{align*}
\dim_{\F}(\underline{D})
&=
\dim_{\F}\left(
\Cobs^\perp\otimes\F^{\nbase}
\right)
+
\dim_{\F}\left(
\F^N\otimes\underline{\left(\Cbase^\perp\right)}
\right)
-
\dim_{\F}\left(
\Cobs^\perp\otimes\underline{\left(\Cbase^\perp\right)}
\right)
\\
&=
\nbase(N-K)
+
(\nbase+1-\kbase)N
-
(\nbase+1-\kbase)(N-K)
\\
&=
\nbase N-\kbase K+K.
\end{align*}

\section{Proof of Subfield Subcode Property}\label{appendix:subfieldsubcode}

\begin{lemma}
				The code $\Dlambdabeta$ is the $\B$-subfield subcode of $\underline \Dlinpb M_{\beta^{-1}},$ i.e., 
				$$\Dlambdabeta  = \underline \Dlinpb M_{\beta^{-1}}\cap  \B^{N\nbase}.$$
			\end{lemma}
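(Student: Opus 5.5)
The plan is to prove the set equality by double inclusion, working directly from the definitions of $\underline{\Dlinpb}$ and $\Dlambdabeta$. Throughout, $M_\beta=\operatorname{Diag}(\beta_j)_{j\in J}$ and $M_{\beta^{-1}}=\operatorname{Diag}(\beta_j^{-1})_{j\in J}$. Both matrices are well defined and mutually inverse because every $\beta_j$ is nonzero. Right multiplication by $M_{\beta^{-1}}$ acts coordinatewise: $(\underline h M_{\beta^{-1}})_j=h_j\beta_j^{-1}$ for $j\in J$.

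For the inclusion $\Dlambdabeta\subseteq \underline{\Dlinpb}M_{\beta^{-1}}\cap\B^{N\nbase}$, take $\underline\lambda\in\Dlambdabeta$. By definition, $\underline\lambda\in\B^{N\nbase}$, and there is some $\underline h\in\underline{\Dlinpb}$ with $h_j=\lambda_j\beta_j$ for all $j\in J$. This says exactly that $\underline h=\underline\lambda M_\beta$. Hence
\[
\underline\lambda=\underline h M_{\beta^{-1}}\in \underline{\Dlinpb}M_{\beta^{-1}},
\]
and since $\underline\lambda$ already lies in $\B^{N\nbase}$, it lies in the intersection.

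For the reverse inclusion, take $x\in\underline{\Dlinpb}M_{\beta^{-1}}\cap\B^{N\nbase}$. Write $x=\underline hM_{\beta^{-1}}$ with $\underline h\in\underline{\Dlinpb}$, and set $\lambda_j:=x_j=h_j\beta_j^{-1}\in\B$. Then $h_j=\lambda_j\beta_j$ for every $j\in J$, so $\underline h$ witnesses the existential condition in the definition of $\Dlambdabeta$. Therefore $x=\underline\lambda\in\Dlambdabeta$.

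A remark can be added for consistency with the rest of the paper. The membership condition $h_j\in\operatorname{span}_\B(\beta_j)$ is automatic for any $\underline h\in\underline{\Dlin}$ with $\underline hM_{\beta^{-1}}\in\B^{N\nbase}$. Consequently, the intersection also equals $\underline{\Dlin}M_{\beta^{-1}}\cap\B^{N\nbase}$. This latter form is the subfield subcode of the $\F$-linear code $\underline{\Dlin}M_{\beta^{-1}}$, which has length $N\nbase$ and, by \eqref{eq:dim:punc:Dlin}, redundancy $K(\kbase-1)$. It is this form that justifies applying the bound \eqref{eq:bound:subfield} to obtain $\dim_\B(\Dlambdabeta)\ge N\nbase-mK(\kbase-1)$.

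No step is a real obstacle. The only point needing care is recognizing that $\underline{\Dlinpb}$ is merely $\B$-linear, so the statement should also be recast with $\underline{\Dlin}$ in order to invoke the subfield-subcode dimension bound. That recasting is the one-line observation made above.
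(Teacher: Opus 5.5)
Your double-inclusion argument is correct and essentially the paper's proof. Both directions reduce to the identity $\underline{h}=\underline{\lambda}M_{\beta}$, equivalently $\underline{\lambda}=\underline{h}M_{\beta^{-1}}$, using $\beta_j\neq 0$. Your closing remark is also correct and worth keeping: the paper's proof asserts that $\underline{\Dlinpb}$ is $\F$-linear, when it is only $\B$-linear, and your recasting as $\underline{\Dlin}M_{\beta^{-1}}\cap\B^{N\nbase}$ is what actually justifies applying \eqref{eq:bound:subfield} with redundancy $K(\kbase-1)$.
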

			\begin{proof}
				Observe that $\underline{\Dlinpb} \subseteq \F^{Nn_0}$ is $\F$-linear, whereas
				$\Dlambdabeta \subseteq \B^{Nn_0}$ is $\B$-linear. Let
				$\underline{\lambda}\in \Dlambdabeta$; then there exists $\underline{h}\in \underline{\Dlinpb}$
				such that $h_{j}=\lambda_{j}\beta_{j}$ for all $j\in J$.
				Since $\beta_{j} \neq 0$ for all $j\in J$, the diagonal matrix
				$M_{\beta^{-1}} = \mathrm{Diag}(\beta_1^{-1},\ldots,\beta_{N}^{-1})$
				is well-defined and invertible. Hence, for each $i\in[N]$,
				\begin{align*}
				\underline{\lambda}_i=(\lambda_{i,1},\ldots,\lambda_{i,n_0})
				=\left(\frac{h_{i,1}}{\beta_{i,1}},\ldots,\frac{h_{i,n_0}}{\beta_{i,n_0}}\right)
				=\underline{h_i}\cdot \mathrm{Diag}(\beta_i^{-1}),
				\end{align*}
				where $\underline{h_i}$ is viewed as a row vector, and right multiplication by
				$\mathrm{Diag}(\beta_i^{-1})$ corresponds to componentwise multiplication by the
				scalars $\beta_{i,j}^{-1}$. Therefore,
				\[
				\underline{\lambda}=\underline{h}\cdot M_{\beta^{-1}} \in \underline{\Dlinpb}\, M_{\beta^{-1}},
				\]
				which implies $\underline{\lambda}\in \underline{\Dlinpb}\, M_{\beta^{-1}}\cap \B^{Nn_0}$.
				For the reverse inclusion, let $g\in \underline{\Dlinpb}\, M_{\beta^{-1}}\cap \B^{Nn_0}$.
				Then there exists $\underline{h}\in \underline{\Dlinpb}$ such that
				$g=(\underline{h}_1,\ldots, \underline{h}_{N})\,M_{\beta^{-1}}$, and hence
				\begin{align*}
				g&=(\underline{h}_1,\ldots, \underline{h}_{N})\,
				\mathrm{Diag}(\beta_1^{-1},\ldots,\beta_{N}^{-1})
				=(\underline{\lambda}_1,\ldots, \underline{\lambda}_{N})\in \Dlambdabeta.
				\end{align*}
				We conclude that $\Dlambdabeta$ is precisely the $\B$-subfield subcode of the
				$\F$-linear code $\underline{\Dlinpb}\, M_{\beta^{-1}}$. \qed
				\end{proof}

\end{document}